\newtheorem{theorem}{Theorem}
\newtheorem{lemma}{Lemma}
\newtheorem{corollary}{Corollary}
\NewDocumentCommand{\multiubrace}{mmm}
 {
  \egreg_multiubrace:nnn {#1} {#2} {#3}
 }
\def\expandafter\normalsize\expandafter{%
	\normalsize%
	\setlength\abovedisplayskip{4pt}%
	\setlength\belowdisplayskip{4pt}%
	\setlength\abovedisplayshortskip{2pt}%
	\setlength\belowdisplayshortskip{2pt}%
}
\begin{document}
\title{Tri-Hybrid Beamforming Design for Fully-Connected Pinching Antenna Systems} 
\author{{Cheng-Jie Zhao, Zhaolin Wang,~\IEEEmembership{Member, IEEE},\\ Hyundong Shin,~\IEEEmembership{Fellow, IEEE}, and Yuanwei Liu,~\IEEEmembership{Fellow, IEEE}}

\thanks{Cheng-Jie Zhao, Zhaolin Wang, and Yuanwei Liu are with the Department of Electrical and Electronic Engineering, The University of Hong Kong, Hong Kong (e-mail: chengjie\_zhao@connect.hku.hk; zhaolin.wang@hku.hk; yuanwei@hku.hk). \\
\indent Hyundong Shin is with the Department of Electronics and Information Convergence Engineering, Kyung Hee University, 1732 Deogyeong-daero, Giheung-gu, Yongin-si, Gyeonggi-do 17104, Republic of Korea (e-mail: hshin@khu.ac.kr).}}
\maketitle

\begin{abstract}
	A novel fully-connected (FC) tri-hybrid beamforming (THB) architecture is proposed for pinching antenna systems (PASS). In contrast to conventional sub-connected (SC) PASS, the proposed FC architecture employs a tunable phase-shifter network to interconnect all radio frequency (RF) chains with all waveguides. This facilitates a THB framework that integrates conventional hybrid analog-digital beamforming with pinching beamforming. A weighted sum-rate (WSR) optimization problem is then formulated to jointly optimize the transmit beamformers and pinching antenna (PA) positions. Two algorithms are developed to address this challenging non-convex problem. 1) \textit{Fractional programming (FP)-based algorithm:} This algorithm directly maximizes the WSR using an FP-based alternating optimization framework. Particularly, a success-history based adaptive differential evolution (SHADE) method is proposed to optimize PA positions, effectively addressing the intractable multimodal objective function. 2) \textit{Zero-forcing (ZF)-based algorithm:} To reduce design complexity, zero-forcing is employed for transmit beamforming. The PA positions are subsequently optimized to maximize the WSR via a modified SHADE method. Simulation results validate the effectiveness of the proposed algorithms, revealing that the FC-THB PASS achieves WSR comparable to the SC architecture while delivering superior energy efficiency with fewer RF chains.
\end{abstract}
\begin{IEEEkeywords}
	Fully-connected architecture, pinching antenna system, tri-hybrid beamforming
\end{IEEEkeywords}
\section{Introduction}
In recent decades, the landscape of wireless communication has been profoundly transformed by the emergence of multiple-input multiple-output (MIMO) technology~\cite{MIMO}. By exploiting spatial degrees of freedom, MIMO systems have dramatically improved both spectral efficiency and link reliability, laying the groundwork for modern multi-user communication networks. As wireless services continue to expand toward higher data rates, wider coverage, and more heterogeneous application scenarios, the need for large-scale antenna arrays has become increasingly prominent. Nevertheless, scaling the antenna infrastructure to meet these demands inevitably leads to excessive hardware costs, higher energy consumption, and significant computational complexity, all of which pose critical challenges to practical deployment. \\
\indent To address the limitations of conventional fixed-structure antennas, reconfigurable antenna (RA) technologies have emerged, introducing a new level of flexibility in electromagnetic control~\cite{RA}. By adaptively adjusting their polarization, operating frequency, or radiation pattern, RAs enable energy-efficient beam steering without proportional increases in radio frequency (RF) hardware. Representative examples include intelligent reflecting surfaces (IRS) and fluid/movable antennas (FA/MA). IRS manipulates wireless propagation by programming the reflection phase of passive elements, thereby enhancing signal strength and suppressing interference~\cite{IRS, IRS1}. In contrast, FA and MA achieve transceiver-side reconfigurability. Particularly, FA typically adjusts the position of antennas through the controlled movement of conductive liquids within dielectric structures, whereas MA mainly advocates mechanical displacement of antenna elements. By dynamically adapting to channel variations, these architectures can exploit small-scale fading and enhance spatial diversity, offering a promising means of improving link reliability and throughput~\cite{FAS, MAS}. 

\subsection{Prior Works} 
Despite these advances, the reconfigurability of IRS, FA, and MA remains fundamentally constrained. FA and MA typically operate within limited spatial apertures, e.g., on the order of only several wavelengths, thus offering insufficient capability to mitigate large-scale path loss. Meanwhile, the performance of IRS strongly depends on accurate channel estimation and phase alignment, which poses significant challenges for practical implementation and real-time adaptation. As a remedy, a novel reconfigurable antenna concept, termed the pinching-antenna system (PASS), has recently been proposed~\cite{PASS1}. PASS enables continuous and wide-range spatial control of electromagnetic apertures by adjusting the excitation and pinching points along dielectric waveguides~\cite{PASS2}. Through such controllable design, the effective radiation position and pattern can be dynamically reconfigured over extended regions, thereby facilitating the establishment of line-of-sight (LoS) links and substantially improving transmission quality. Furthermore, constructed from low-cost dielectric waveguides and detachable pinching elements, PASS provides a lightweight, scalable, and cost-effective means of achieving adaptive electromagnetic manipulation. Its hardware simplicity allows flexible deployment and easy reuse without the need for complex RF front-end circuitry. \\
\indent Owing to its promising advantages, PASS has attracted growing research interest in recent years. To list a few, a comprehensive analytical framework for PASS was developed in~\cite{PASS3}, deriving closed-form expressions for outage probability and average rate under a fundamental single-user scenario. The study revealed that PASS outperforms traditional antenna systems in terms of both reliability and data rate, making it a viable candidate for next-generation wireless technology. The array gain achieved by multiple pinching antennas (PAs) on a waveguide was analyzed in~\cite{PASS4}, uncovering the optimal number of antennas and their spatial spacing to maximize beamforming gain. In what follows, optimal beamforming design and PA position optimization strategies are investigated to maximize the sum-rate in downlink and uplink communication scenarios~\cite{PASS5, PASS6, PASS7}. \\
\indent As a further advance, a physics-based hardware model for PASS, along with an in-depth analysis of the electromagnetic field behavior, was presented in~\cite{PASS8}, providing insights into the fundamental physical principles governing the performance of PASS. Based on this model, a transmit power minimization problem was formulated, addressing the joint optimization of transmit and pinching beamforming under both continuous and discrete PA activations. Given the practical challenges in implementing continuous PA deployment, dual-timescale system designs and deployment protocols were proposed in~\cite{PASS9} and~\cite{PASS10} to enhance energy efficiency in real-world applications. In addition, recent advancements have explored the use of large language models in beamforming design~\cite{PASS11}, highlighting the growing trend of leveraging machine learning techniques to optimize antenna systems and further improve their performance. 

\subsection{Motivation and Contributions}
Although current studies on PASS have demonstrated its superior advantages in establishing reliable LoS links, enhancing transmission quality, and improving achievable data rates, most of these works have been limited to sub-connected (SC) architectures, where each dielectric waveguide is fed by a dedicated RF chain. This structure indeed simplifies the hardware implementation and facilitates independent waveguide control, making it suitable for small-scale or proof-of-concept prototypes. However, such a one-to-one mapping between RF chains and waveguides imposes a fundamental constraint on scalability. As the number of waveguides increases to achieve higher beamforming gain or finer spatial resolution, the number of required RF chains must grow proportionally, leading to prohibitively high hardware cost, circuit complexity, and energy consumption. This contradicts the design goals of PASS, namely achieving reconfigurability and adaptability through simple and cost-efficient hardware. \\
\indent Furthermore, the SC architecture inherently limits signal diversity and spatial flexibility. Since each waveguide is connected to a single RF chain, inter-waveguide cooperation is impossible, preventing the joint utilization of spatial degrees of freedom (DoFs) across the entire aperture. Consequently, the system cannot fully exploit the rich spatial diversity offered by multiple radiation points along the dielectric waveguides. To address these limitations, this paper aims to investigate the novel fully-connected (FC) architecture for PASS and explore joint beamforming and position optimization for the proposed system. The key contributions of this work are summarized as follows:
\begin{itemize}
	\item We propose an FC architecture for PASS, which employs a tunable phase-shifter network to connect RF chains and waveguides. This architecture naturally supports a tri-hybrid beamforming (THB) framework that integrates digital, analog, and pinching beamforming. Allowing each waveguide to be jointly fed by every RF chain, FC PASS offers flexible configuration in the number of RF chains and beamforming design DoFs to reduce hardware and computational cost. 
	\item We formulate a joint transmit beamforming and PA position optimization problem to maximize the weighted sum-rate (WSR) of the proposed FC-THB PASS subject to transmit power and inter-PA distance constraints. To address this highly non-convex and coupled problem, we first show that the transmit power constraint can be equivalently removed. Then, a fractional programming (FP)-based alternating optimization (AO) algorithm is developed, where each variable is iteratively updated while keeping the others fixed. In particular, a success-history based adaptive differential evolution (SHADE) method is proposed for optimizing the PA positions, addressing the intractable multi-modal optimization challenges.
	\item We further propose a zero-forcing (ZF)–based algorithm as a low-complexity alternative, given that the FP-based algorithm requires multiple iterative updates involving all designed variables for convergence. In this approach, the PA positions are first optimized via the proposed SHADE to maximize the WSR achieved by ZF beamforming, after which the digital and analog beamformers are obtained through decomposition of the corresponding ZF beamformer.
	\item We provide comprehensive numerical results to validate the proposed FC-THB PASS and the effectiveness of the developed algorithms. The results demonstrate that: 1) The FC architecture, optimized with the proposed AO algorithm, achieves comparable WSR performance to the SC counterpart using significantly fewer RF chains. For instance, it delivers $90\%$ of the SC performance with only half the RF chains; and 2) The proposed low-complexity ZF algorithm offers an attractive trade-off between performance and computational complexity, making it well-suited for energy- and cost-constrained deployments.
\end{itemize}

\subsection{Organization and Notations}
The remainder of this paper is structured as follows. Section~\ref{sec:model} introduces the system model and formulates the WSR maximization problem for the proposed FC-THB PASS. Section~\ref{sec3} details the proposed AO algorithm for addressing the formulated problem. Section~\ref{sec4} provides a ZF beamforming-based algorithm as a low-complexity sub-optimal alternative for the formulated problem. Numerical results evaluating the performance of the proposed FC architecture and algorithms under various system configurations are presented in Section~\ref{sec5}. Finally, Section~\ref{sec6} concludes the paper.

\emph{Notations:} Scalars, vectors/matrices, and Euclidean subspaces are denoted by regular, boldface, and calligraphic letters, respectively. The sets of complex, real, and integer numbers are represented by $\mathbb{C}$, $\mathbb{R}$, and $\mathbb{Z}$, respectively. The inverse, transpose, conjugate transpose (Hermitian transpose), and trace operations are represented by $(\cdot)^{-1}$, $(\cdot)^T$, $(\cdot)^H$, and $\mathrm{Tr}(\cdot)$, respectively. The absolute value and Euclidean norm are indicated by $|\cdot|$ and $\|\cdot\|$, respectively. The expectation operator is denoted by $\mathbb{E}{\cdot}$. The real part of a complex number is denoted by $\Re {\cdot}$. An identity matrix of size $N \times N$ is denoted by $\mathbf{I}_N$. The vector ${{\bf e}_k}$ denotes the $k$-th canonical basis vector, whose entries are all zeros except for a one in the $k$-th position. Accordingly, its self-outer product ${{\bf E}_k} = {{\bf e}_k}{{\bf e}_k}^H$ represents a diagonal matrix with a single one at the $(k,k)$-th entry and zeros elsewhere. $\odot$ denotes element-wise multiplication Big-O notation is represented by $O(\cdot)$.

\section{System Model and Problem Formulation} \label{sec:model}

\begin{figure}[tb]
	\centering
	\includegraphics[scale=0.2]{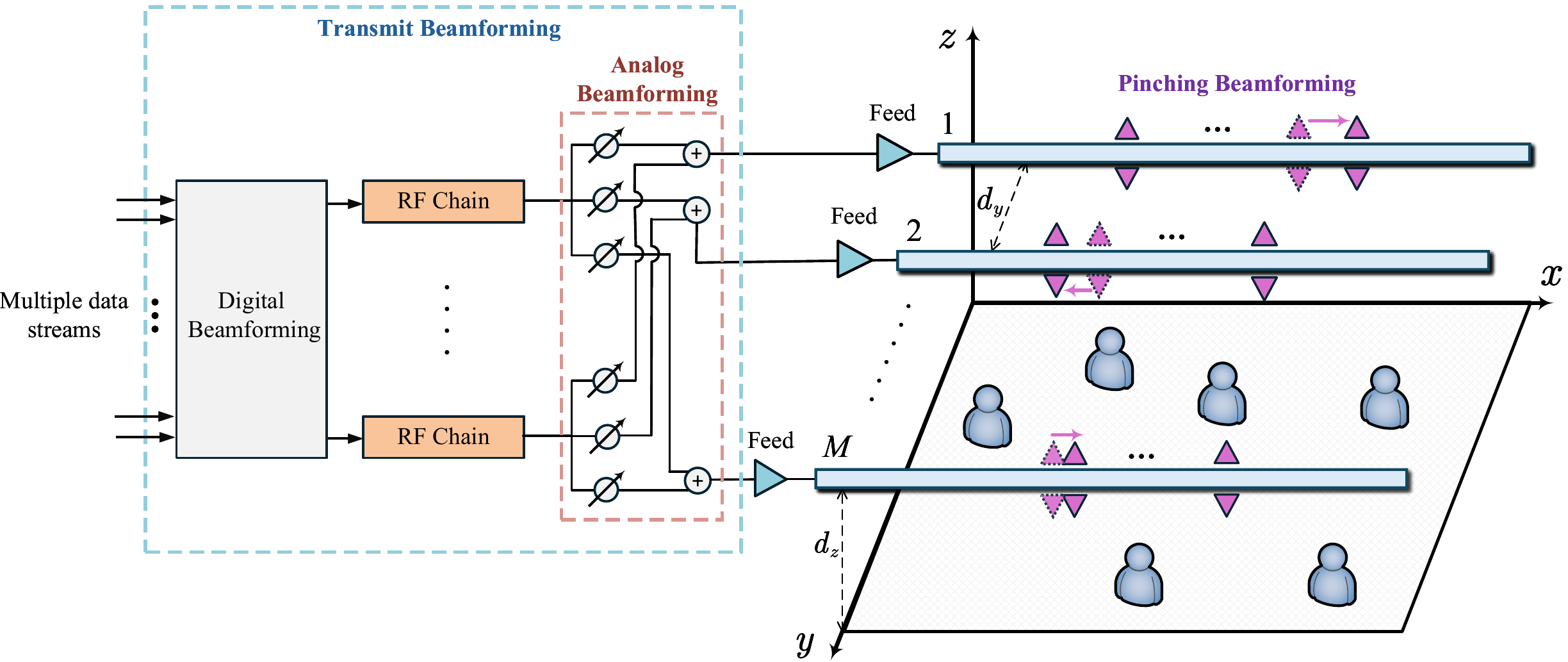}
	\caption{Illustration of the investigated FC-THB PASS}
	\label{system_model}
\end{figure} 

As shown in Fig.~\ref{system_model}, an FC-THB PASS is considered. Specifically, $M$ dielectric waveguides are installed parallelly to the $x$-axis at height $d_z$ and spaced by $d_y$ along the $y$-axis, with each waveguide having length $L$ and equipped with $N$ PAs. These PAs can freely move across the waveguide and extract signals from the waveguide at the pinched locations. $K$ single-antenna users are assumed to randomly locate at a $D_x \times D_y \times d_z$ space and simultaneously served by $N \times M$ PAs. The set of waveguides, PAs in each waveguide, and users are denoted as $\mathcal{M} = \left\{ {1 \le m \le M\left| {m \in \mathbb{Z}} \right.} \right\}$, ${\mathcal{N}_m} = \left\{ {1 \le n \le N\left| {m \in \mathcal{M}} \right.} \right\}$, and $\mathcal{K} = \left\{ {1 \le k \le K\left| {k \in \mathbb{Z}} \right.} \right\}$, respectively. On the $m$-th waveguide, the position of the $n$-th PA is represented as ${\boldsymbol{\psi} _{m,n}^p} = [{{x_{m,n}},\left( {m - 1} \right){d_y},{d_z}}]^T$, where $x_{m,n} \in [0, L]$ is the location of the $n$-th PA on the $m$-th PA. The feed point's location for this waveguide is represented as ${\boldsymbol{\psi} _{m,0}^p} = \left[ {0,\left( {m - 1} \right){d_y},{d_z}} \right]^T$ while the position of the $k$-th user is expressed as ${\boldsymbol{\psi} _k^u} = \left[ {x_k^u,y_k^u,z_k^u} \right]$. 

\subsection{Signal Model}
In the proposed FC-THB architecture, data streams ${\bf{s}} \in \mathbb{C}^{K}$ intended for $K$ users are firstly multiplexed and digitally precoded by the digital beamformer ${{\bf{W}}_{{\rm{BB}}}} \in \mathbb{C}^{N_{{\rm{RF}}} \times K}$, where $N_{{\rm{RF}}}$ is the number of RF chains. The precoded signals are then processed by the analog beamformer ${{\bf{W}}_{{\rm{RF}}}} \in \mathbb{C}^{M \times N_{{\rm{RF}}}}$ and subsequently fed into the dielectric waveguides. Accordingly, the input signal to the waveguides can be expressed as 
\begin{equation}
	{\bf{r}} = {{\bf{W}}_{{\rm{RF}}}}{{\bf{W}}_{{\rm{BB}}}}{\bf{s}}.
\end{equation}
Unlike the SC architecture, where each RF chain exclusively drives a single waveguide, waveguides in the proposed FC architecture receive signals mixed from all RF chains and enable more flexible connectivity and data flow. \\
\indent Each PA radiates a phase-shifted version of the signal propagating within its waveguide, creating a beamforming DoF dependent on the PA locations, which refers to as pinching beamforming. We adopt a base station (BS)-centric design, implying the BS determines the PA positions to carry out pinching beamforming and optimize overall system performance. Additionally, an equal power model is considered, assuming all PAs on a single waveguide radiate identical power. Therefore, the signal propagation response for the $n$-th PA on the $m$-th waveguide is expressed as \cite{PASS8}:
\begin{align} \label{W_PB_L}
	{g_{m,n}} &= \frac{1}{{\sqrt N }} \exp \left\{ {-j\frac{{2\pi }}{{{\lambda _g}}}{\left\| {\psi _{m,n}^p - \psi _{m,0}^p} \right\|_2}} \right\} \notag \\
	&= \frac{1}{{\sqrt N }} \exp \left\{ {-j\frac{{2\pi }}{{{\lambda _g}}} {{x_{m,n}}}} \right\}, 
\end{align}
where ${\lambda _g} = \frac{\lambda }{{{n_{{\rm{eff}}}}}}$ is the guided wavelength with $\lambda$ and ${{{n_{{\rm{eff}}}}}}$ representing the free-space wavelength and effective refractive index of a dielectric waveguide, respectively~\cite{neff}. The propagation response vector for the $m$-th waveguide is then represented as ${{\bf{g}}_m}\left( {{{\bf{x}}_m}} \right) = {\left[ {{g_{m,1}},{g_{m,2}}, \cdots ,{g_{m,N}}} \right]^T},$ which is a function of the PA location vector on this waveguide ${{\bf{x}}_m} = \left[ {{x_{m,1}}, \cdots ,{x_{m,N}}} \right]^T$. Collecting the propagation response vectors, the pinching beamformer can be represented by a block diagonal matrix
\begin{equation} 
	{{\bf{W}}_{{\rm{PB}}}} (\mathbf{X}) = \left[ {\begin{array}{*{20}{c}}
			{{{\bf{g}}_1}}&{\bf{0}}& \cdots &{\bf{0}}\\
			{\bf{0}}&{{{\bf{g}}_2}}& \cdots &{\bf{0}}\\
			\vdots & \vdots & \ddots & \vdots \\
			{\bf{0}}&{\bf{0}}& \cdots &{{{\bf{g}}_M}}
	\end{array}} \right] \in {\mathbb{C}^{MN \times M}},
\end{equation} 
where $\mathbf{X} = [{{\bf{x}}_1}, \cdots, {{\bf{x}}_M}] \in \mathbb{C}^{N \times M}$ is the overall PA location matrix. Consequently, the signal transmitted by the FC-THB PASS is given by 
\begin{equation}
	\mathbf{z} =  {{\bf{W}}_{{\rm{PB}}}}(\mathbf{X}) {{\bf{W}}_{{\rm{RF}}}}{{\bf{W}}_{{\rm{BB}}}}{\bf{s}} \in {\mathbb{C}^{MN}}.
\end{equation}

\subsection{Channel Model}
In this paper, we consider only LoS path in the free space. This assumption is justified for two main reasons. First, dielectric waveguides generally operate at high-frequency bands, such as the millimeter-wave range~\cite{PASS12}, where the LoS component can be up to 20 dB stronger than the non-LoS (NLoS) components~\cite{PASS13}. Second, in our considered high-frequency scenarios, even when NLoS propagation exists, its path loss exponent is typically larger than that of the LoS path, leading to significantly weaker received power. Consequently, the NLoS components contribute negligibly to the overall channel response and are thus omitted from the analysis. Under this assumption, the channel between the $k$-th user and the PAs deployed on the $m$-th waveguide can be expressed as
\begin{equation} \label{H_L}
	{\bar {{\bf{h}}}_{k,m}} = {\left[ {\frac{{\eta \exp \left\{ {j{\textstyle{{2\pi } \over \lambda }}{D_{k,m,1}}} \right\}}}{{{D_{k,m,1}}}}, \cdots ,\frac{{\eta \exp \left\{ {j{\textstyle{{2\pi } \over \lambda }}{D_{k,m,N}}} \right\}}}{{{D_{k,m,N}}}}} \right]^T}
\end{equation}
where $\eta$ is the coefficient proportional to the effective surface of the pinching elements and ${D_{k,m,n}}$ is the distance between the $n$-th PA on the $m$-th waveguide and the $k$-th user, given by 
\begin{align}
	&{D_{k,m,n}} = {\left\| {\psi _{m,n}^p - \psi _k^u} \right\|} \notag \\
	&= \sqrt {{{\left( {{x_{m,n}} - x_k^u} \right)}^2} + {{\left( {\left( {m - 1} \right){d_y} - y_k^u} \right)}^2} + (d_z-z_k^u)^2}.
\end{align}
\indent Stacking the channel vectors related to all waveguide, the effective channel response vector for the $k$-th user yields 
\begin{equation}
	{{\bf{h}}_k}(\mathbf{X}) = {\left[ {{{\bar {\bf{h}}}_{k,1}^T},{{\bar {\bf{h}}}_{k,2}^T}, \cdots ,{{\bar {\bf{h}}}_{k,M}^T}} \right]^T} \in \mathbb{C}^{MN},
\end{equation}
which is also a function of the PA locations. Consequently, the signal received by the $k$-th user can be represented as
\begin{equation} \label{signal_model}
	y_k = {{\bf{h}}_k^H}(\mathbf{X}) {\bf{z}} + n_k = {{\bf{h}}_k^H}(\mathbf{X}) {{\bf{W}}_{{\rm{PB}}}}(\mathbf{X}){{\bf{W}}_{{\rm{RF}}}}{{\bf{W}}_{{\rm{BB}}}}{\bf{s}} + n_k,
\end{equation}
where $n_k$ is the additive white Gaussian noise suffered by the $k$-th user following the circularly symmetric complex Gaussian (CSCG) distribution with zero mean and variance $\sigma_k^2$, i.e., $n_k \sim {\cal CN} (0, \sigma_k^2)$. The overall downlink signal for all users is given by
\begin{equation}
	{\bf{y}} = {\bf{H}}^H(\mathbf{X}){{\bf{W}}_{{\rm{PB}}}}(\mathbf{X}){{\bf{W}}_{{\rm{RF}}}}{{\bf{W}}_{{\rm{BB}}}}{\bf{s}} + {\bf{n}},
\end{equation}
where ${\bf{y}} = {\left[ {{y_1}, \cdots ,{y_K}} \right]^T} \in \mathbb{C}^{K}$, ${\bf{n}} = {\left[ {{n_1}, \cdots ,{n_K}} \right]^T} \in \mathbb{C}^{K}$, and ${\bf{H}} = {\left[ {{{\bf{h}}_1},{{\bf{h}}_2}, \cdots ,{{\bf{h}}_K}} \right]} \in \mathbb{C}^{MN \times K}$. In this paper, the source with $\mathbb{E} \left\{ {{\bf{s}}{{\bf{s}}^H}} \right\} = {{\bf{I}}_K}$ and $\mathbb{E} \left\{ {{\bf{s}}{{\bf{n}}^H}} \right\} = {{\bf{0}}_K}$ is assumed. Additionally, the channel state information (CSI), mainly including the coordinate of user or distance between user and PAs, is considered known to the BS through some efficient CSI acquisition or sensing strategies.

\subsection{Problem Formulation}
In this section, the WSR metric is considered to guide the optimization of designed variables. Specifically, transmit beamformers and PAs' positions are jointly optimized to maximize the weighted sum of rates achieved by all users. Based on~(\ref{signal_model}), the signal-to-interference-plus-noise (SINR) for decoding the desired signal at user $k$ is given by
\begin{align}
	{\gamma _k} &= \frac{{{{\left| {{\bf{h}}_k^H\left( {\bf{X}} \right){{\bf{w}}_k}} \right|}^2}}}{{\sum\nolimits_{i \ne k} {{{\left| {{\bf{h}}_k^H\left( {\bf{X}} \right){{\bf{w}}_i}} \right|}^2}}  + \sigma _k^2}} \notag \\
	&= \frac{{{\rm{Tr}}( {{{\bf{H}}_k}{\bf{W}}{{\bf{E}}_k}{{\bf{W}}^H}} )}}{{{\rm{Tr}}( {{{\bf{H}}_k}{\bf{W}}\left( {{{\bf{I}}_K} - {{\bf{E}}_k}} \right){{\bf{W}}^H}} ) + \sigma _k^2}}, 
\end{align}
where ${{\bf{w}}_k}$ is the $k$-th column of the composite beamformer ${\bf{W}}={{\bf{W}}_{{\rm{PB}}}}(\mathbf{X}){{\bf{W}}_{{\rm{RF}}}}{{\bf{W}}_{{\rm{BB}}}}$, ${{\bf{H}}_k} = {{\bf{h}}_k}{\bf{h}}_k^H$, and ${{\bf{E}}_k} = {{\bf{e}}_k}{\bf{e}}_k^T$. With a serving priority $\left\{ {{\beta _k}} \right\}_{k = 1}^K$, which satisfies $\sum\nolimits_{k = 1}^K {{\beta _k}}=1$, the WSR is represented as
\begin{equation} \label{rate}
	R = \sum\limits_{k = 1}^K {{\beta _k}{{\log }_2}\left( {1 + {\gamma _k}} \right)}.
\end{equation}
Consequently, the WSR maximization problem can be formulated as
\begin{subequations}\label{1}
	\begin{align} 
		\left( {\rm{P1}}\right)  \  &\mathop {\max }\limits_{{{\bf{W}}_{{\rm{BB}}}},{{\bf{W}}_{{\rm{RF}}}},{\bf{X}}} \quad R  \\
		\label{1a} \mathrm{s.t.} \ \  &\left\| {{{\bf{W}}_{{\rm{RF}}}}{{\bf{W}}_{{\rm{BB}}}}} \right\|_F^2 = P, \\
		\label{1aa} &{\left| {{{\bf{W}}_{{\rm{RF}}}}\left( {i,j} \right)} \right|^2} = 1,\forall i,j,   \\
		\label{1b} &0 \le {x_{m,n}} \le L, n \in \mathcal{N}_m, m \in \mathcal{M},  \\
		\label{1c} &{x_{m,n}} - {x_{m,n - 1}} \ge \Delta d, , n \in \mathcal{N}_m \backslash \left\{ 1 \right\}, m \in \mathcal{M}. 
	\end{align}
\end{subequations}
\indent In (P1), constraint (\ref{1a}) and (\ref{1aa}) regulate the transmitting power of the system and analog beamforming, respectively. Constraints (\ref{1b}) and (\ref{1c}) restrict the adjustable PA locations, where $\Delta d > 0$ defines the minimum inter-PA distance to avoid mutual coupling. More particularly, constraint (\ref{1c}) is based on the assumption that PA locations in each vector $\mathbf{x}_m$ are sorted in non-decreasing order, such that PAs with larger indices are placed further from the feed point of the waveguide. \\
\indent Problem (P1) is extremely challenging to solve, as it entails the joint optimization of three coupled beamforming stages. The optimization variables are intricately intertwined within the logarithmic, complex-exponential, and fractional terms of the objective function (\ref{rate}). Moreover, the transmit power constraint (\ref{1a}) introduces a bilinear coupling between the two transmit beamformers, further complicating the problem structure. Consequently, both the objective function (\ref{1}) and the constraints (\ref{1a})–(\ref{1aa}) are non-convex with respect to the joint optimization variables. To address these challenges, we proposed two optimization algorithms in the following sections, namely a \emph{FP-based algorithm} and a \emph{ZF-based algorithm}.

\section{Tri-Hybrid Beamforming via Fractional Programming} \label{sec3}
In this section, we develop an FP-based AO approach, which directly maximizes the spectral efficiency, to solve problem (P1). The main idea of FP is to liberate designed variables from the intractable ratio and logarithm terms. Subsequently, variables can be solved individually following the spirit of AO.
\subsection{Problem Reformulation} 
As discussed above, jointly obtaining the global optimum with respect to all variables in (P1) is intractable due to the highly coupled and non-convex structure of the problem. To make the problem more manageable, we reformulate it into a more tractable form. Specifically, although the transmit power constraint (\ref{1a}) is non-convex, it can be relaxed or equivalently transformed by exploiting the special structural properties of (P1).
\begin{lemma}[\emph{Power Constraint Relaxation}] \normalfont \label{lemma1}
Consider a new problem
\begin{subequations} \label{p2}
	\begin{align}
		\left( {\rm{P2}}\right)  \  \mathop {\max }\limits_{{{\bf{W}}_{{\rm{BB}}}},{{\bf{W}}_{{\rm{RF}}}},{\bf{X}}} \quad &\bar R = \sum\limits_{k = 1}^K {{\beta _k}{{\log }_2}\left( {1 + {{\bar \gamma }_k}} \right)} \label{1.5} \\
		\mathrm{s.t.} \ \  & (\ref{1aa}), (\ref{1b}), (\ref{1c}), 
	\end{align}
\end{subequations}
where 
\begin{equation}
	{{\bar \gamma }_k} = \frac{{{\rm{Tr}}( {{{\bf{H}}_k}{\bf{W}}{{\bf{E}}_k}{{\bf{W}}^H}} )}}{{{\rm{Tr}}( {{{\bf{H}}_k}{\bf{W}}\left( {{{\bf{I}}_K} - {{\bf{E}}_k}} \right){{\bf{W}}^H}}) + \frac{{\sigma _k^2}}{P}{{\left\| {{{\bf{W}}_{{\rm{RF}}}}{{\bf{W}}_{{\rm{BB}}}}} \right\|}_F^2}}}.
\end{equation}
Given the optimal solution for (P2) as ${\bf{W}}_{{\rm{BB}}}^\star$, ${\bf{W}}_{{\rm{RF}}}^\star$, and ${\bf{X}}^\star$, the optimal solution for (P1) is $\frac{{\sqrt P }}{{{{\left\| {{{\bf{W}}_{{\rm{RF}}}^\star}{{\bf{W}}_{{\rm{BB}}}^\star}} \right\|}_F}}}{\bf{W}}_{{\rm{BB}}}^\star$, ${\bf{W}}_{{\rm{RF}}}^\star$, and ${\bf{X}}^\star$.
\end{lemma}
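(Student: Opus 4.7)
The plan is to establish that (P2) is a tight relaxation of (P1) by exploiting a scale invariance in $\bar R$. First, I would verify that $\bar R$ is invariant under any nonzero scaling $\mathbf{W}_{\rm BB}\to c\mathbf{W}_{\rm BB}$. Such a scaling sends the composite beamformer $\mathbf{W}=\mathbf{W}_{\rm PB}(\mathbf{X})\mathbf{W}_{\rm RF}\mathbf{W}_{\rm BB}$ to $c\mathbf{W}$, so the numerator $\mathrm{Tr}(\mathbf{H}_k\mathbf{W}\mathbf{E}_k\mathbf{W}^H)$, the interference term $\mathrm{Tr}(\mathbf{H}_k\mathbf{W}(\mathbf{I}_K-\mathbf{E}_k)\mathbf{W}^H)$, and the noise surrogate $\frac{\sigma_k^2}{P}\|\mathbf{W}_{\rm RF}\mathbf{W}_{\rm BB}\|_F^2$ all scale by exactly $|c|^2$. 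The common factor cancels in $\bar\gamma_k$, leaving $\bar R$ unchanged.

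Next, I would observe that the feasible set of (P1) is contained in that of (P2), since the analog, position, and spacing constraints (\ref{1aa})–(\ref{1c}) are inherited and only the power equality (\ref{1a}) is dropped. Moreover, whenever $\|\mathbf{W}_{\rm RF}\mathbf{W}_{\rm BB}\|_F^2=P$, the surrogate noise term becomes $\sigma_k^2$, so $\bar\gamma_k=\gamma_k$ and $\bar R=R$ at every (P1)-feasible triple. This immediately yields $R^\star_{\rm P1}\le \bar R^\star_{\rm P2}$.

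To establish the reverse inequality constructively, I would take any (P2)-feasible triple $(\mathbf{W}_{\rm BB},\mathbf{W}_{\rm RF},\mathbf{X})$ with $\|\mathbf{W}_{\rm RF}\mathbf{W}_{\rm BB}\|_F\neq 0$ and define $\tilde{\mathbf{W}}_{\rm BB}=\frac{\sqrt{P}}{\|\mathbf{W}_{\rm RF}\mathbf{W}_{\rm BB}\|_F}\mathbf{W}_{\rm BB}$. Since the scaling only touches $\mathbf{W}_{\rm BB}$, constraints (\ref{1aa})–(\ref{1c}) are preserved, and by construction $\|\mathbf{W}_{\rm RF}\tilde{\mathbf{W}}_{\rm BB}\|_F^2=P$, so (\ref{1a}) is satisfied. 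By the scale invariance from the first step, $\bar R$ is unchanged by the rescaling; by the identification in the second step, at the rescaled triple $\bar R$ coincides with $R$. Applied to the optimum of (P2), this produces a feasible point of (P1) whose objective equals $\bar R^\star_{\rm P2}$, giving $R^\star_{\rm P1}\ge \bar R^\star_{\rm P2}$. Combining the two inequalities proves $R^\star_{\rm P1}=\bar R^\star_{\rm P2}$ and justifies the stated recovery formula.

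The main obstacle is, in fact, quite mild: one must keep the rescaling confined to the digital stage so that the unit-modulus constraint (\ref{1aa}) is not perturbed, and one must dispose of the degenerate case $\|\mathbf{W}_{\rm RF}\mathbf{W}_{\rm BB}\|_F=0$, which yields $\bar R=0$ and is therefore never optimal whenever a nontrivial feasible triple exists, and so may be excluded without loss of generality.
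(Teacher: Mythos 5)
Your proposal is correct and follows essentially the same route as the paper's own proof: exploit the scale invariance of $\bar R$ under rescaling of $\mathbf{W}_{\rm BB}$, note that the rescaled point satisfies the power constraint and leaves the remaining constraints untouched, and conclude that (P1) and (P2) attain the same optimal value with the stated recovery formula. Your version is in fact slightly more careful than the paper's, since it states both inequalities explicitly and disposes of the degenerate case $\left\| \mathbf{W}_{\rm RF}\mathbf{W}_{\rm BB} \right\|_F = 0$, which the paper handles only implicitly by assuming a non-zero beamformer.
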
 
\begin{proof}
	Please refer to Appendix~\ref{appa}.
\end{proof}
\textbf{Lemma~\ref{lemma1}} allows the power constraint in (P1) to be equivalently absorbed into the objective function, leading to the reformulated problem (P2). Although this transformation slightly complicates the objective, it simplifies the constraint structure and preserves equivalence. Another challenge remaining in (P2) is the tangling of variables in the highly intractable sum–log–ratio objective. FP techniques, including the Lagrange dual transform~\cite[Theorem 3]{FP2} and the quadratic transform~\cite[Theorem 2]{FP1}, serve as an effective approach to decouple the variables and facilitate closed-form updates in each iteration. \\
\indent The Lagrangian dual transform introduces a non-negative real auxiliary variable ${\boldsymbol{\xi}} \in {\mathbb{R}^K}$ to release the signal and interference terms from the logarithm operation. Specifically, the objective (\ref{1.5}) of (P2) is equivalent to
\begin{align} \label{2}
	&\sum\limits_{k = 1}^K {{\beta _k}( {{{\log }_2}\left( {1 + {\xi _k}} \right) - {\xi _k}} )} \notag \\
	& \qquad \quad + \sum\limits_{k = 1}^K {\frac{{{\beta _k}\left( {1 + {\xi _k}} \right){\rm{Tr}}( {{{\bf{H}}_k}{\bf{W}}{{\bf{E}}_k}{{\bf{W}}^H}} )}}{{{\rm{Tr}}( {{{\bf{H}}_k}{\bf{W}}{{\bf{W}}^H}} ) + \frac{{\sigma _k^2}}{P}\left\| {{{\bf{W}}_{{\rm{RF}}}}{{\bf{W}}_{{\rm{BB}}}}} \right\|_F^2}}},
\end{align}
where the auxiliary variable is exactly the SINR of user $k$ and takes the form
\begin{equation} \label{xi}
	\xi _k = \frac{{{\rm{Tr}}( {{{\bf{H}}_k}{\bf{W}}{{\bf{E}}_k}{{\bf{W}}^H}} )}}{{{\rm{Tr}}( {{{\bf{H}}_k}{\bf{W}}\left( {{{\bf{I}}_K} - {{\bf{E}}_k}} \right){{\bf{W}}^H}} ) + \frac{{\sigma _k^2}}{P}\left\| {{{\bf{W}}_{{\rm{RF}}}}{{\bf{W}}_{{\rm{BB}}}}} \right\|_F^2}},
\end{equation}
By the quadratic transform, (\ref{2}) is further recast as
\begin{align} \label{3}
	&f_t = {\rm{const}}\left( {\boldsymbol{\xi }} \right) + \sum\limits_{k = 1}^K {2\sqrt {{\beta _k}\left( {1 + {\xi _k}} \right)} {\mathop{\rm Re}\nolimits} \left\{ {\mu _k^*{\bf{h}}_k^H{\bf{W}}{{\bf{e}}_k}} \right\}} \notag \\
	&- \sum\limits_{k = 1}^K {{{\left| {{\mu _k}} \right|}^2}\left( {{\rm{Tr}}( {{{\bf{H}}_k}{\bf{W}}{{\bf{W}}^H}}) + \frac{{\sigma _k^2}}{P}\left\| {{{\bf{W}}_{{\rm{RF}}}}{{\bf{W}}_{{\rm{BB}}}}} \right\|_F^2} \right)},
\end{align}
where ${\rm{const}}\left( {\boldsymbol{\xi }} \right) = \sum\nolimits_{k = 1}^K {{\beta _k}( {{{\log }_2}\left( {1 + {\xi _k}} \right) - {\xi _k}} )}$, and $\boldsymbol{\mu }$ is the second auxiliary variable introduced to linearize optimization variables from ratio expressions. The expression (\ref{3}) is equivalent to (\ref{2}) if and only if $\boldsymbol{\mu }$ takes the value
\begin{equation} \label{mu}
	{\mu _k} = \frac{{\sqrt {{\beta _k}\left( {1 + {\xi _k}} \right)} {\bf{h}}_k^H{\bf{W}}{{\bf{e}}_k}}}{{{\rm{Tr}}( {{{\bf{H}}_k}{\bf{W}}{{\bf{W}}^H}} ) + \frac{{\sigma _k^2}}{P}\left\| {{{\bf{W}}_{{\rm{RF}}}}{{\bf{W}}_{{\rm{BB}}}}} \right\|_F^2}}, \forall k \in {\mathcal{K}}
\end{equation}
Hitherto, (P2) is rearranged to a more tractable equivalent form:
\begin{subequations}
	\begin{align}
		\left( {\rm{P3}}\right)  \  \mathop {\max }\limits_{{\boldsymbol{\xi }}, {\boldsymbol{\mu }}, {{\bf{W}}_{{\rm{BB}}}},{{\bf{W}}_{{\rm{RF}}}},{\bf{X}}} \quad & f_t  \\
		\mathrm{s.t.} \ \  &{\boldsymbol{\xi }} \succeq \mathbf{0}_K, (\ref{1aa}), (\ref{1b}), (\ref{1c})
	\end{align}
\end{subequations}
In the rearranged problem (P3), it is still NP-hard to jointly find a global optimizer w.r.t. all variables. Fortunately, since the intractable logarithm and ratio operations are addressed, AO can be readily deployed to iteratively solve one variable while fixing the others. In particular, the overall optimization variables are divided into two blocks, namely the transmit beamforming matrices ${{\bf{W}}_{{\rm{BB}}}}$ and ${{\bf{W}}_{{\rm{RF}}}}$, and the pinching beamforming matrix ${\bf{X}}$. 

\subsection{Transmit Beamforming Optimization}
This section focuses on optimizing the transmit beamforming. Specifically, the digital beamformer ${{\bf{W}}_{{\rm{BB}}}}$ is obtained using a closed-form solution, while the analog beamformer ${{\bf{W}}_{{\rm{RF}}}}$ is optimized via the Riemannian conjugate gradient method, as detailed below.
\subsubsection{Digital Beamforming Optimization} The original optimization problem (P3) with respect to the digital beamforming matrix can be concisely expressed as 
\begin{align}
	\left( {\rm{P4}}\right) \ \mathop {\max }\limits_{{\bf{W}}_{{\rm{BB}}}} \  & 2{\mathop{\rm Re}\nolimits} \left\{ {{\rm{Tr}}( {{{\bf{A}}_{{\rm{BB}}}^H}{{\bf{W}}_{{\rm{BB}}}}} )} \right\} - {\rm{Tr}}( {{\bf{W}}_{{\rm{BB}}}^H{\bf{B}}_{{\rm{BB}}}{{\bf{W}}_{{\rm{BB}}}}})
\end{align}
where ${\bf{A}}_{{\rm{BB}}} = {\bf{W}}_{{\rm{RF}}}^H{\bf{W}}_{{\rm{PB}}}^H{\bf{\tilde H}}{\bf{C}}$, ${\bf{\tilde H}} = {\left[ {\mu_1{{\bf{h}}_1}, \cdots ,\mu_K{{\bf{h}}_K}} \right]}$, ${\bf{C}} = {\rm{diag}}\left( {\sqrt {{\beta _1}\left( {1 + {\xi _1}} \right)} , \cdots ,\sqrt {{\beta _K}\left( {1 + {\xi _K}} \right)} } \right)$, and ${\bf{B}}_{{\rm{BB}}} = {\bf{W}}_{{\rm{RF}}}^H\left( {{\bf{W}}_{{\rm{PB}}}^H{\bf{\tilde H}}{{{\bf{\tilde H}}}^H}{{\bf{W}}_{{\rm{PB}}}} + \frac{1}{P}\sum\nolimits_{k = 1}^K {{{\left| {{\mu _k}} \right|}^2}\sigma _k^2} {{\bf{I}}_M}} \right){{\bf{W}}_{{\rm{RF}}}}$. The sub-problem of ${{\bf{W}}_{{\rm{BB}}}}$ is essentially a unconstrained quadratic programming (QP) problem. Since ${\bf{B}}_{{\rm{BB}}}$ is surely positive semi-definite, (P4) is concave over ${{\bf{W}}_{{\rm{BB}}}}$. Consequently, the closed-form optimal solution for this sub-problem can be obtained via the first-order optimality condition as follows:
\begin{equation} \label{W_BB}
	{{\bf{W}}_{{\rm{BB}}}^\star} = {\bf{B}}_{{\rm{BB}}}^{ - 1}{{\bf{A}}_{{\rm{BB}}}}.
\end{equation}
\subsubsection{Analog Beamforming Optimization} The optimization sub-problem associated with the analog beamforming matrix can be reformulated in a compact form as follows
\begin{subequations} \label{W_RF}
	\begin{align}
		\left( {\rm{P5}}\right) \mathop {\max }\limits_{{{\bf{W}}_{{\rm{RF}}}}} \  & 2{\mathop{\rm Re}\nolimits} \left\{ {{\rm{Tr}}( {{{\bf{A}}_{{\rm{RF}}}^H}{{\bf{W}}_{{\rm{RF}}}}})} \right\} - {\rm{Tr}}( {{\bf{W}}_{{\rm{RF}}}^H{\bf{B}}_{{\rm{RF}}}{{\bf{W}}_{{\rm{RF}}}}}{\bf{Q}}_{{\rm{RF}}} )  \\
		\mathrm{s.t.} \ \  &(\ref{1aa}),
	\end{align}
\end{subequations}
where ${\bf{A}}_{{\rm{RF}}} = {\bf{W}}_{{\rm{PB}}}^H{\bf{\tilde HCW}}_{{\rm{BB}}}^H$, ${\bf{Q}}_{{\rm{RF}}} = {\bf{W}}_{{\rm{BB}}}{{\bf{W}}_{{\rm{BB}}}^H}$, and ${\bf{B}}_{{\rm{RF}}} = {\bf{W}}_{{\rm{PB}}}^H{\bf{\tilde H}}{{{\bf{\tilde H}}}^H}{{\bf{W}}_{{\rm{PB}}}} + \frac{1}{P}\sum\nolimits_{k = 1}^K {{{\left| {{\mu _k}} \right|}^2}\sigma _k^2} {{\bf{I}}_M}$. It is observed that (P5) exhibits a similar structure with (P4) in terms of the objective function and hence is also a QP over ${\bf{W}}_{{\rm{RF}}}$. Since ${\bf{B}}_{{\rm{RF}}}$ and ${\bf{Q}}_{{\rm{RF}}}$ are positive semi-definite, the objective function is concave on ${\bf{W}}_{{\rm{RF}}}$. However, the unit modulus constraint on the elements of ${\bf{W}}_{{\rm{RF}}}$ confines a highly non-convex set, which renders the solving of (P5) intractable. \\
\indent The unit modulus constraint essentially regulates the analog beamformer on a complex torus, which is the Cartesian product of $M \times N_{RF}$ complex unit circles. To tackle this constraint, common approaches include the projected gradient descent (PGD)~\cite{Boyd, PGD},  MO~\cite{ManOpt, ManOpt2}, and extreme point pursuit (EPP)~\cite{EPP1, EPP2}. However, since the complex torus is a non-convex set, the convergence of PGD is not guaranteed. The EPP algorithm, which is also centered around the PGD, adopts the spirit of penalty and homotopy to induce a solution for constant modulus problems. Nonetheless, the convergence is still not theoretically guaranteed and verified by merely simulation results. The setting of penalty sequence also requires elaborate tuning. Conversely, a stationary solution can be obtained by the MO if the objective function is smooth. Consequently, the MO is the most suitable approach to solve (P5). Specifically, since the objective (\ref{W_RF}) is a concave quadratic function of ${\bf{W}}_{{\rm{RF}}}$ with satisfactory properties, the Riemannian conjugate gradient method is suitable to obtain a stationary solution for (P5). \\
\indent Denote the feasible set of analog beamformers as $\cal W$. $\cal W$ is a Riemannian submanifold of $\mathbb{C} \in ^{M \times N_{{\rm{RF}}}}$ equipped with the canonical metric inherited from the Euclidean space and (\ref{W_RF}) as $f_{{\rm{RF}}}$. The Euclidean gradient of (\ref{W_RF}) is given by
\begin{equation} \label{grad1}
	\nabla {f_{{\rm{RF}}}}\left( {{{\bf{W}}_{{\rm{RF}}}}} \right) = {{\bf{A}}_{{\rm{RF}}}} - {{\bf{B}}_{{\rm{RF}}}}{{\bf{W}}_{{\rm{RF}}}}{{\bf{Q}}_{{\rm{RF}}}}.
\end{equation}
The Riemannian gradient at a given point is then defined as the projection of the Euclidean gradient onto the tangent space $\mathcal{T}_{{\bf{W}}_{{\rm{RF}}}}\mathcal{W}$, i.e.,
\begin{align} \label{grad2}
	&{\rm{grad}}{f_{{\rm{RF}}}}\left( {{{\bf{W}}_{{\rm{RF}}}}} \right) = {\rm{Pro}}{{\rm{j}}_{\mathcal{T}_{{\bf{W}}_{{\rm{RF}}}}\mathcal{W}}}\left( {\nabla {f_{{\rm{RF}}}}\left( {{{\bf{W}}_{{\rm{RF}}}}} \right)} \right)  \notag \\
	&=\nabla {f_{{\rm{RF}}}}\left( {{{\bf{W}}_{{\rm{RF}}}}} \right) - {\mathop{\rm Re}\nolimits} \left\{ {\nabla {f_{{\rm{RF}}}}\left( {{{\bf{W}}_{{\rm{RF}}}}} \right) \odot {\bf{W}}_{{\rm{RF}}}^*} \right\} \odot {{\bf{W}}_{{\rm{RF}}}},
\end{align}
where $\mathcal{T}_{{\bf{W}}_{{\rm{RF}}}}\mathcal{W} = \left\{ {{\bf{Z}} \in \mathbb{C} ^{M \times N_{{\rm{RF}}}} \left| {{\mathop{\rm Re}\nolimits} \left\{ {{\bf{Z}} \odot {\bf{W}}_{{\rm{RF}}}^*} \right\} = {\bf{0}}} \right.} \right\}$. \\
\indent Given the Riemannian gradient, the search direction ${{\bf{D}}_t}$ at the $t$-th iteration is determined by combining the negative Riemannian gradient with the previous direction ${{\bf{D}}_{t-1}}$ through a conjugacy parameteras $\rho_t$. This process follows the spirit of classical conjugate gradient method. Formally,
\begin{equation} \label{conjugacy}
	{{\bf{D}}_{t+1}} =  - {\rm{grad}}{f_{{\rm{RF}}}}\left( {{\bf{W}}_{{\rm{RF}}}^{\left( t+1 \right)}} \right) + {\rho _{t+1}}{\mathcal{T}_{{\bf{W}}_{{\rm{RF}}}^{\left( t+1 \right)}}}\left( {{{\bf{D}}_{t}}} \right),
\end{equation}
where ${\mathcal{T}_{{\bf{W}}_{{\rm{RF}}}^{\left( t \right)}}}$ denotes the \textit{transport} operator mapping a tangent vector from $\mathcal{T}_{{\bf{W}}_{{\rm{RF}}}^{\left( t-1 \right)}}\mathcal{W}$ to $\mathcal{T}_{{\bf{W}}_{{\rm{RF}}}^{\left( t \right)}}\mathcal{W}$. This \textit{transport} can also be realized via the projection, i.e., 
\begin{equation}
	{\mathcal{T}_{{\bf{W}}_{{\rm{RF}}}^{\left( t \right)}}}\left( {{{\bf{D}}_{t - 1}}} \right) = {{\bf{D}}_{t - 1}} - {\mathop{\rm Re}\nolimits} \left\{ {{{\bf{D}}_{t - 1}} \odot {{\left( {{\bf{W}}_{{\rm{RF}}}^{\left( t \right)}} \right)}^*}} \right\} \odot {\bf{W}}_{{\rm{RF}}}^{\left( t \right)}.
\end{equation}
The Polak–Ribière rule~\cite{nonlinear} is typically adopted for $\rho_t$. \\
\begin{algorithm}[tbp]
	\setlength{\textfloatsep}{0.cm}
	\setlength{\floatsep}{0.cm}
	\small
	\caption{Riemannian conjugate gradient algorithm for analog beamforming optimization}
	\renewcommand{\algorithmicrequire}{\textbf{Input}}
	\renewcommand{\algorithmicensure}{\textbf{Output}}
	\label{alg1}
	\begin{algorithmic}[1]
		\REQUIRE ${\bf{W}}_{{\rm{RF}}}^{\left( 0 \right)}$, ${{\bf{A}}_{{\rm{RF}}}}$, ${{\bf{B}}_{{\rm{RF}}}}$, ${{\bf{Q}}_{{\rm{RF}}}}$, and $t=0$.
		\ENSURE A stationary solution ${\bf{W}}_{{\rm{RF}}}^{\star}$.
		\STATE ${{\bf{D}}_0} =  - {\rm{grad}}{f_{{\rm{RF}}}}\left( {{\bf{W}}_{{\rm{RF}}}^{\left( 0 \right)}} \right)$;
		\REPEAT
		\STATE Choose Armijo backtracking line search step size $\nu_{t}$;
		\STATE Update the solution via retraction by (\ref{retraction});
		\STATE Compute the new Riemannian gradient by (\ref{grad1}) and (\ref{grad2});
		\STATE Calculate the conjugate direction by (\ref{conjugacy});
		\STATE $k \leftarrow k+1$;
		\UNTIL a stopping criterion meets.
	\end{algorithmic}
\end{algorithm}	
\indent After moving along the search direction, the updated point must be mapped back to the manifold to satisfy the unit-modulus constraint. This is achieved via the \textit{retraction} operation. Specially, this process is represented as
\begin{equation} \label{retraction}
	{\bf{W}}_{{\rm{RF}}}^{\left( t+1 \right)} = {\mathcal{R}_{{\bf{W}}_{{\rm{RF}}}^{\left( {t} \right)}}}\left( {{\nu  _t}{{\bf{D}}_t}} \right) = \exp \{ {j\arg \left( {{\bf{W}}_{{\rm{RF}}}^{\left( {t} \right)} + {\nu _t}{{\bf{D}}_t}} \right)} \},
\end{equation}
where $\mathcal{R}$ is the \textit{retraction} operator and $\arg$ is an operation to extract the phase element-wisely. This procedure is summarized in \textbf{Algorithm~\ref{alg1}} and theoretically guaranteed to converge to a stationary solution. It is worthy to mention that since a closed-form stationary solution can be calculated for (\ref{W_RF}), i.e., ${\bf{W}}_{{\rm{RF}}}^{{\rm{sta}}} = {\bf{B}}_{{\rm{RF}}}^{ - 1}{{\bf{A}}_{{\rm{RF}}}}{\bf{Q}}_{{\rm{RF}}}^{ - 1}$, ${\bf{W}}_{{\rm{RF}}}$ can initialized by projecting this solution to $\cal W$. 

\subsection{Pinching Beamforming Optimization}
After determining the updating rule of two transmit beamformers, ${\bf{W}}_{{\rm{BB}}}$ and ${\bf{W}}_{{\rm{RF}}}$, in an alternating manner, the third beamforming DoF, pinching beamforming, together with the effective channel, is controlled by the position of PAs, which is critical in the proposed FC-THB PASS. 
\subsubsection{Multi-Modal Optimization Challenges} 
Treating the positions individually, the joint pinching beamforming and channel optimization sub-problem is represented as 
\begin{subequations}\label{W_PB}
	\begin{align} 
		\left( {\rm{P6}}\right) \quad \mathop {\max }\limits_{{{\bf{X}}}} \quad  &f_X =  2{\mathop{\rm Re}\nolimits} \left\{ {{\rm{Tr}}( {{\bf{\bar G}}\left( {\bf{X}} \right){{\bf{A}}_X}{\bf{C}}} )} \right\} \notag \\ 
		&- {\rm{Tr}}( {{\bf{\bar G}}\left( {\bf{X}} \right){{\bf{B}}_X}{{{\bf{\bar G}}}^H}\left( {\bf{X}} \right)} ) \\
		\mathrm{s.t.} \ \  &(\ref{1b}), (\ref{1c}),
	\end{align}
\end{subequations}
where ${\bf{A}}_X = {\bf{W}}_{{\rm{RF}}}{\bf{W}}_{{\rm{BB}}}$, ${\bf{B}}_X = {\bf{A}}_{{\rm{X}}}{\bf{A}}_{{\rm{X}}}^H$, and ${\bf{\bar G}}\left( {\bf{X}} \right) = {{{\bf{\tilde H}}}^H}\left( {\bf{X}} \right){{\bf{W}}_{{\rm{PB}}}}\left( {\bf{X}} \right) = {\left[ {\mu _1^*{{{\bf{\bar g}}}_1}\left( {\bf{X}} \right), \cdots ,\mu _K^*{{{\bf{\bar g}}}_K}\left( {\bf{X}} \right)} \right]^T}$. By the signal propagation model in (\ref{W_PB_L}) and (\ref{H_L}), $${{{\bf{\bar g}}}_k}\left( {\bf{X}} \right) = {\bf{h}}_k^H\left( {\bf{X}} \right){{\bf{W}}_{{\rm{PB}}}}\left( {\bf{X}} \right) = \left[ {{{\bar g}_{k,1}}\left( {{{\bf{x}}_1}} \right), \cdots ,{{\bar g}_{k,M}}\left( {{{\bf{x}}_M}} \right)} \right],$$
where
\begin{align}
	&{{\bar g}_{k,m}}\left( {{{\bf{x}}_m}} \right) = \notag \\ &\sum\limits_{n = 1}^N {\frac{{\eta\exp \left\{ { - j\frac{{2\pi }}{\lambda }\left( {{D_{k,m,n}}\left( {x_{m,n}} \right)  + {n_{{\rm{eff}}}}{x_{m,n}}} \right)} \right\}}}{{{\sqrt N}{D_{k,m,n}}\left( {x_{m,n}} \right) }}}.
\end{align}
\indent It can be observed from $f_X$ that the impact of PA position adjustment affects both the phase shift and amplitude attenuation in the signal propagation process of PASS. The resulting objective function is essentially \textit{multi-modal}~\cite{multi-modal}, containing numerous stationary points corresponding to local maxima or minima. Each user receives $M \times N$ attenuated and phase-shifted replicas of the transmitted signal, which collectively create a highly intricate landscape for the objective function, making this sub-problem extremely challenging to solve optimally. An example is provided in Appendix~\ref{challenge} to illustrate this difficulty. \\
\indent As illustrated by Fig.~\ref{fig_challenge}, the target objective function presents a rugged surface characterized by numerous peaks and valleys. This type of \textit{multi-modal} problem widely exists in the physical world. prevalent in physical systems. Conventional gradient descent methods often become trapped at suboptimal stationary points. Similarly, majorization and approximation-based techniques encounter comparable challenges. To date, most PASS frameworks have employed a PA-wise line grid search, which yields near coordinate-wise optimal solutions but is computationally intensive, necessitating a trade-off between optimality and computational complexity. In this paper, an evolution-based algorithm is proposed to arrive a near globally optimal solution for ${{\bf{X}}}$.

\subsubsection{Evolution-based PA Position Optimization}
It has been widely investigated in the applied math community that the evolutionary algorithms and their variants are particularly useful in this type of problem due to their population-based approach and inherent flexibility. Consequently, the SHADE algorithm~\cite{SHADE} is adopted to tackle the optimization of (P6). SHADE evolves a population of candidate solutions through iterative processes of mutation, crossover, and selection, leveraging these operations to converge toward an optimal solution. \\
\begin{algorithm}[tbp]
	\setlength{\textfloatsep}{0.cm}
	\setlength{\floatsep}{0.cm}
	\small
	\caption{SHADE algorithm for PA position optimization}
	\renewcommand{\algorithmicrequire}{\textbf{Input}}
	\renewcommand{\algorithmicensure}{\textbf{Output}}
	\label{alg2}
	\begin{algorithmic}[1]
		\REQUIRE $P_{\rm{DE}}$, $p$, max generation $G_{\rm{DE}}$, memory size $H_{\rm{DE}}$, and essential parameters for evaluating the objective function.
		\ENSURE A near globally optimal solution ${\bf{X}}^{\star}$.
		\STATE Initialize population $\left\{ {{\bf{X}}_i^{\left( 0 \right)}} \right\}_{i = 1}^{P_{\rm{DE}}}$ randomly in feasible space, memories $\left\{ {{\mu_{F,h}},\mu_{{\rm{CR}},h}} \right\}_{h = 1}^H$ all as 0.5, and archive ${\cal A} = \emptyset$;
		\FOR {generation $g_{\rm{DE}}=1:G_{\rm{DE}}$}
		\FOR {each candiate ${\bf{X}}_i^{\left( g_{\rm{DE}}-1 \right)}$}
		\STATE Choose memory index $k$ randomly and sample ${F_i} \sim \text{Cauchy} \left( {\mu_{{F},k},0.1} \right), {{\rm{CR}}_i} \sim \mathcal{N} \left( {\mu_{{\rm{CR}},k},0.1} \right)$ at $\left( {0,1} \right]$;
		\STATE Select ${{\bf{X}}_{{\rm{pbest}}}}$, ${{\bf{X}}_{{r_1}}}$, and ${{\bf{X}}_{{r_2}}}$ from the top $p \% $, population, and union of population and ${\cal A}$;
		\STATE Mutate by (\ref{mutation});
		\STATE Crossover by (\ref{crossover});
		\IF {$f_X({\bf{U}}_i) > f_X({\bf{X}}_i^{\left( g_{\rm{DE}}-1 \right)})$} 
		\STATE ${\bf{X}}_i^{\left( g_{\rm{DE}} \right)} = {{\bf{U}}_i}$, $A \leftarrow A \cup \left\{ {{{\bf{U}}_i}} \right\}$, and record the success index $i$, parameter ${F_i}, {{\rm{CR}}_i}$, and improvement $\Delta {f_i} = {f_X}\left( {{{\bf{U}}_i}} \right) - {f_X}\left( {\bf{X}}_i^{\left( g_{\rm{DE}}-1 \right)} \right)$;
		\ELSE 
		\STATE ${\bf{X}}_i^{\left( g_{\rm{DE}} \right)}={\bf{X}}_i^{\left( g_{\rm{DE}}-1 \right)}$
		\ENDIF
		\ENDFOR
		\STATE Collect success indices as $\cal S$. If $\cal S$ is not empty, update the memory slot indexed by $h = \left( {g_{\rm{DE}} \bmod H_{\rm{DE}}} \right)+1$ as $$\mu_{F,k} \leftarrow \frac{{\sum\nolimits_{s \in \mathcal{S}} {{w_s}F_s^2} }}{{\sum\nolimits_{s \in \mathcal{S}} {{w_s}{F_s}} }},\mu_{{\rm{CR}},k} \leftarrow \sum\nolimits_{s \in \mathcal{S}} {{w_s}{{\rm{CR}}_s}},$$ where ${w_s}$ is the normalized weight ${w_s} = \frac{{\Delta {f_s}}}{{\sum\nolimits_{t \in \mathcal{S}} {\Delta {f_t}} }},s \in \mathcal{S}$;
		\STATE If $\left| \mathcal{A} \right| > P_{\rm{DE}}$, randomly remove elements until $\left| \mathcal{A} \right| = P_{\rm{DE}}$;
		\STATE $g_{\rm{DE}} \leftarrow g_{\rm{DE}}+1$;
		\ENDFOR
	\end{algorithmic}
\end{algorithm}	
\indent Specifically, the first step for SHADE is to initialize a population of $P_{\rm{DE}}$ candidate solutions $\left\lbrace {{\bf{X}}_1}, {{\bf{X}}_2}, \cdots, {{\bf{X}}_{P_{\rm{DE}}}}\right\rbrace $, each sampled from the feasible space of ${{\bf{X}}}$. For each candidate, the objective value is evaluated, serving as the fitness score. Second, the \textit{mutation} is performed. For each candidate solution, SHADE generates a mutant using the current-to-$p$best mutation strategy
\begin{equation} \label{mutation}
	{{\bf{V}}_i} = {{\bf{X}}_i} + F\left( {{{\bf{X}}_{{\rm{pbest}}}} - {{\bf{X}}_i}} \right) + F\left( {{{\bf{X}}_{{r_1}}} - {{\bf{X}}_{{r_2}}}} \right).
\end{equation}
where ${{\bf{X}}_{{\rm{pbest}}}}$ is randomly chosen from the top $p \% $ of the current population, i.e., elite solutions, and ${{\bf{X}}_{{r_1}}}$, ${{\bf{X}}_{{r_2}}}$ are distinct randomly selected individuals. The mutation factor $F$ controls the step size and is adaptively tuned. A trial solution is then formed by combining the mutant with the original candidate via \textit{crossover} to increase diversity
\begin{equation} \label{crossover}
	{u_{i,nm}} = \left\{ \begin{array}{l}
		{v_{i,nm}}, \quad \text{if} \quad \text{cr} < \text{CR}\\
		{l_{i,nm}}, \quad \text{otherwise}
	\end{array} \right. , \forall nm
\end{equation}
where $\text{CR}$ is the crossover rate, $\text{cr}$ is a number drawn from the uniform distribution between 0 and 1, and the subscript $nm$ denotes the $(n,m)$-th element of the position matrix \footnote{Note that during the mutation and crossover stages, both the mutant and descendant solutions must remain within the feasible region to ensure compliance with the problem's constraints.}. Finally, the trial solution competes with its parent, i.e., the solution yielding a large objective function value survives. This procedure is conducted among the population and only solutions that improve or maintain the fitness survive to the next generation. \\
\indent The essence of SHADE lies in its success-history based adaptation of control parameters to enhance the performance and robustness of differential evolution algorithm. In particular, it maintains an external memory archive of historical successful values of $F$ and $\text{CR}$. At each generation, new parameter values are sampled from probability distributions whose means are updated using this archive. This allows SHADE to gradually learn parameter settings that are well-suited to the problem structure, improving convergence and stability. The detailed SHADE for acquiring a near globally optimal position for PAs' positions is summarized in \textbf{Algorithm~\ref{alg2}}.

\subsection{Overall Algorithm, Convergence, and Complexity}
Building upon the solutions developed for each variable, and the result the \textbf{Lemma~\ref{lemma1}}, the overall AO algorithm for solving (P1) is summarized in \textbf{Algorithm~\ref{alg3}}. 
\begin{algorithm}[tbp]
	\setlength{\textfloatsep}{0.cm}
	\setlength{\floatsep}{0.cm}
	\small
	\caption{FP-based AO for FC-THB PASS}
	\renewcommand{\algorithmicrequire}{\textbf{Input}}
	\renewcommand{\algorithmicensure}{\textbf{Output}}
	\label{alg3}
	\begin{algorithmic}[1]
		\REQUIRE System parameters.
		\ENSURE Near coordinate-wisely optimal ${\bf{W}}_{{\rm{BB}}}^{\star}$, ${\bf{W}}_{{\rm{RF}}}^{\star}$, ${\bf{X}}^{\star}$.
		\STATE Initialize ${\bf{W}}_{{\rm{BB}}}$, ${\bf{W}}_{{\rm{RF}}}$, ${\bf{X}}$;
		\REPEAT
		\STATE Update $\boldsymbol{\xi}$ by (\ref{xi});
		\STATE Update $\boldsymbol{\mu}$ by (\ref{mu});
		\STATE Update ${\bf{W}}_{{\rm{BB}}}$ by (\ref{W_BB});
		\STATE Update ${\bf{W}}_{{\rm{RF}}}$ following \textbf{Algorithm~\ref{alg1}};
		\STATE Update ${\bf{X}}$ following \textbf{Algorithm~\ref{alg2}};
		\UNTIL a stopping criterion meets.
		\STATE Scale the digital beamformer as ${\bf{W}}_{\rm{BB}} \leftarrow \frac{{\sqrt P }}{{{{\left\| {{{\bf{W}}_{{\rm{RF}}}}{{\bf{W}}_{{\rm{BB}}}}} \right\|}_F}}}{\bf{W}}_{\rm{BB}}$.
	\end{algorithmic}
\end{algorithm}	
\begin{table}[tbp]
	\small
	\centering
	\caption{Comparison of different position optimization methods}
	\label{tab1}
	\begin{tabular}{ccc}
		\toprule
		\textbf{Method}      & \textbf{Optimality}                     & \textbf{Complexity} \\
		\midrule
		GD/SCA      & optimality-unknown stationary  &           ${i_X}{C_X}$ \\
		\midrule
		Grid Search & near coordinate-wisely optimal &           ${{MN}{N_{{\rm{grid}}}}{C_X}}$ \\
		\midrule
		SHADE       & near globally optimal          &           ${{i_X}{P_{{\rm{DE}}}}{C_X}}$ \\
		\bottomrule
	\end{tabular}
\end{table}
In each outer iteration of \textbf{Algorithm~\ref{alg3}}, ${\bf{W}}_{{\rm{BB}}}$ is updated by a closed-form solution. A stationary ${\bf{W}}_{{\rm{RF}}}$ can be acquired via \textbf{Algorithm~\ref{alg1}}~\cite{ManOpt} and the SHADE arrives a near optimal ${\bf{X}}$. In other words, the value of objective function is ensured to be non-decreasing during AO iterations. Furthermore, the WSR is inherently upper-bounded due to the transmit power constraint. Hence, the convergence of the proposed AO algorithm is guaranteed. \\
\indent In each iteration of \textbf{Algorithm~\ref{alg3}}, the computational complexity for optimizing ${\bf{W}}_{{\rm{BB}}}$, $\boldsymbol{\xi}$, and $\boldsymbol{\mu}$ are given by $\mathcal{O} \left( {{M^3}{N^2} + {M^2}{N^2}K + N_{{\rm{RF}}}^3} \right)$, $\mathcal{O} \left( {{M^2}{N^2}{K^2}} \right)$, and $\mathcal{O} \left( {{M}{N}{K^2}} \right)$, respectively. The complexity for optimizing ${\bf{W}}_{{\rm{RF}}}$ lies in the calculation of gradient and objective function in each inner iteration in \textbf{Algorithm~\ref{alg1}} and hence is $\mathcal{O} \left( {I_{\mathrm{iter}, 1}{M^2}{N^2}} \right)$, where $I_{\mathrm{iter}, 1}$ is the number of iterations for Algorithm~\ref{alg1}. For the optimization of ${\bf{X}}$, the most expensive cost is evaluating the achieved objective value for a candidate solution. The complexity for such a evaluation is $C_X = \mathcal{O} \left( {KMN + KM\left( {K + M} \right)} \right)$. The proposed SHADE requires ${I_{\mathrm{iter}, 2}{P_{{\rm{DE}}}}{C_X}}$ computational resources, where $I_{\mathrm{iter}, 2}$ is the iteration number of \textbf{Algorithm~\ref{alg2}}. In the case, $M$, $N$, and $K$ are massive, the overall complexity for \textbf{Algorithm~\ref{alg3}} can be approximated as $\mathcal{O} \left( {{M^2}{N^2}{K^2}} \right)$. Otherwise, the complexity is dominated by the Riemannian conjugate gradient method and SHADE. \\
\indent Furthermore, a comparison of commonly used position optimization methods is presented in Table~\ref{tab1}. The gradient computation cost is equal to the function cost, i.e., $C_X$, and ${N_{{\rm{grid}}}}$ is the number of discretized grids. Although GD/SCA incur relatively low computational costs, the optimality is not guaranteed. On the other hand, the grid search method is both computationally expensive and less effective than SHADE, since ${N_{\mathrm{grid}}}$ must be sufficiently large to ensure that the coordinate-wise optimum is closely approximated.

\section{Tri-Hybrid Beamforming via Zero Forcing} \label{sec4}
Although the algorithm presented in the previous section can directly maximize the achievable WSR of the FC-THB PASS, it may involve prohibitively high computational complexity with a large number of waveguides and PAs, hindering practical implementation. This motivates the development of a beamforming and position optimization algorithm with lower computational complexity with tolerable performance loss. 
\subsection{ZF Beamforming}
To reduce the complexity of designing the transmit beamformer, zero-forcing (ZF) beamforming is adopted in this section. ZF is a widely used linear precoding technique in multi-user MIMO systems, known for its simplicity and robust performance. By employing a fixed structure, i.e., the pseudo-inverse of the channel matrix, ZF ensures that each user receives only their intended signal, effectively eliminating inter-user interference with low computational complexity~\cite{ZF}. In particular, treating the pinching beamforming ${\bf{W}}_{{\rm{PB}}}({\bf{X}})$ and PA-user propagation ${\bf{H}}({\bf{X}})$ together as the effective channel, i.e., ${\bf{G}}_{\text{ZF}}({\bf{X}})={\bf{W}}_{{\rm{PB}}}^H({\bf{X}}){\bf{H}}({\bf{X}})$, the ZF transmit beamformer is given by~\cite{ZF}
\begin{equation}
	\mathbf{W}_{\text{ZF},0}(\mathbf{X}) = \mathbf{G}_{\text{ZF}}(\mathbf{X}) \left( \mathbf{G}_{\text{ZF}}^H(\mathbf{X}) \mathbf{G}_{\text{ZF}}(\mathbf{X}) \right)^{-1}, \notag
\end{equation}
which is also a function of the PAs' positions $\mathbf{X}$. In other words, for any given PA position configuration $\mathbf{X}$, there exists a corresponding ZF beamformer. Furthermore, since the transmit power is restricted, the deployed ZF beamformer should be normalized to $\mathbf{W}_{\text{ZF}} = {\textstyle{{\sqrt P } \over {\left\| {{{\bf{W}}_{{\text{ZF}},0}}} \right\|_F}}}{{\bf{W}}_{{\text{ZF}},0}}$. With the normalized ZF transmit beamforming, the inter-user interference is entirely eliminated and the down-link WSR is hence given by
\begin{equation}
	{R_{\text{ZF}}}\left( {\bf{X}} \right) = \sum\limits_{k = 1}^K {\beta_k \log \left( {1 + \frac{{\left| {{\bf{g}}_{{\text{ZF}},k}^H\left( {\bf{X}} \right){{\bf{w}}_{{\text{ZF}},k}\left( {\bf{X}} \right)}} \right|^2}}{{\sigma _k^2}}} \right)} \notag,
\end{equation}
where ${\bf{g}}_{{\text{ZF}},k}$ and ${\bf{w}}_{{\text{ZF}},k}$ are the $k$-th column of ${\bf{G}}_{\text{ZF}}$ and ${\bf{W}}_{\text{ZF}}$, respectively. By exploring the relationship between $\mathbf{W}_{\text{ZF}}$ and $\mathbf{G}_{\text{ZF}}$, the expression of ${R_{\text{ZF}}}$ can be recast as 
\begin{equation}
	{R_{\text{ZF}}}\left( {\bf{X}} \right) = \sum\limits_{k = 1}^K {\beta_k \log \left( {1 + \frac{{\underline h}_k{p_k}}{{\sigma _k^2}}} \right)}, 
\end{equation}
where $p_k$ is the power allocated to the $k$-th user and ${\underline h}_k=1/\left[ {{\left( {\bf{G}}_\text{ZF}^H {\bf{G}}_{\text{ZF}} \right)}^{ - 1}} \right]_{k,k}$ indicates the quality of the $k$-th user's channel. The problem to maximize the WSR is then formulated as
\begin{subequations}
	\begin{align}
		\left( {\rm{P7}}\right) \quad \mathop {\max }\limits_{{\left\{ {{p_k}} \right\}_{k \in \cal K}}} \quad  &R_{\text{ZF}}  \\ 
		\mathrm{s.t.} \ \  &\sum\nolimits_{k \in K} {{p_k}}  = P,{p_k} \ge 0,\forall k.
	\end{align}
\end{subequations}
This problem is classical in traditional wireless communication and the solution for $p_k$ is exactly determined by the water-filling algorithm~\cite{Boyd}, i.e., $p_k^ \star  = {\left( {{\beta _k}\nu  - {\textstyle{{\sigma _k^2} \over \underline{h}_k}}} \right)^ + }$. If the set of activated users is $\mathcal{P}$, $\nu  = {\textstyle{{P + \sum\nolimits_{k \in \mathcal{P}} {\sigma _k^2/{\underline{h}_k}} } \over {\sum\nolimits_{k \in \mathcal{P}} {{\beta _k}} }}}$. Consequently, the maximum achievable WSR is obtained as
\begin{equation}
	{{\bar R}_{\text{ZF}}}\left( {\bf{X}} \right) = \sum\limits_{k \in \mathcal{P}} {{\log \left( {\frac{{{\beta _k}{\underline{h}_k}}}{{\sigma _k^2}}\nu } \right)}}.
\end{equation}

\subsection{Transmit Beamforming Optimization}
Given the ZF beamformer, we first consider its implementation within the transmit beamforming structure of the proposed FC-THB PASS, which comprises both digital and analog beamformers. Therefore, the ZF beamformer must be realized through the cascade of these two components, leading to a matrix decomposition problem formulated as
\begin{subequations}
	\begin{align}
		\left( {\rm{P9}}\right) \quad \mathop {\min}\limits_{{\bf{W}}_{{\rm{RF}}},{{\bf{W}}_{{\rm{BB}}}}} &\left\| {{{\bf{W}}_{{\rm{ZF}}}} - {{\bf{W}}_{{\rm{RF}}}}{{\bf{W}}_{{\rm{BB}}}}} \right\|_F^2 \label{decomposition} \\
		\mathrm{s.t.} \ \  &(\ref{1a}), {\bf{W}}_{{\rm{RF}}} \in \mathcal{W}. 
	\end{align}
\end{subequations}
In the problem formulated above, the constraint (\ref{1a}) can be temporarily ignored during the optimization, and ${\bf W}_{\rm BB}$ can be normalized afterward. This is supported by the fact that as long as the Euclidean distance between the optimal ZF beamformer and the transmit beamformers are made sufficiently small when ignoring the power constraint, the normalization step will also achieve a small distance to the ZF beamformer~\cite[Lemma 1]{ManOpt2}. \\
\indent Since ${\bf{W}}_{{\rm{RF}}}$ and ${{\bf{W}}_{{\rm{BB}}}}$ remain coupled in the objective function, an AO strategy is still required. By fixing ${\bf{W}}_{{\rm{RF}}}$, the digital beamformer can be obtained by solving $$\mathop {\arg \min }\limits_{{{\bf{W}}_{{\rm{BB}}}}} \left\| {{{\bf{W}}_{{\rm{ZF}}}} - {{\bf{W}}_{{\rm{RF}}}}{{\bf{W}}_{{\rm{BB}}}}} \right\|_F^2,$$ which is a standard least-squares (LS) problem. The corresponding closed-form solution is 
\begin{equation} \label{hbwbb}
	{{\bf{W}}_{{\rm{BB}}}^\star} = {\bf{W}}_{{\rm{RF}}}^\dag {{\bf{W}}_{{\rm{ZF}}}},
\end{equation} 
where ${\bf{A}}^\dag$ denotes the Moore-Penrose inverse of ${\bf{A}}$. \\
\indent Given ${{\bf{W}}_{{\rm{BB}}}}$, the sub-problem w.r.t. ${{\bf{W}}_{{\rm{RF}}}}$ is formulated as $$\mathop {\arg \min }\limits_{{{\bf{W}}_{{\rm{RF}}}} \in \mathcal{W}} \left\| {{{\bf{W}}_{{\rm{ZF}}}} - {{\bf{W}}_{{\rm{RF}}}}{{\bf{W}}_{{\rm{BB}}}}} \right\|_F^2,$$ which is also an LS problem but subject to a unit-modulus constraint. The Riemannian conjugate gradient-based MO method discussed in Section~\ref{sec3} can efficiently handle this constraint and obtain a stationary solution over the complex torus. For this problem, the Euclidean gradient is given by $$\nabla f_{\text{RF,LS}}= {{\bf{W}}_{{\rm{RF}}}}{{\bf{W}}_{{\rm{BB}}}}{\bf{W}}_{{\rm{BB}}}^H - {{\bf{W}}_{{\rm{ZF}}}}{\bf{W}}_{{\rm{BB}}}^H.$$ Replacing $\nabla f_{\text{RF}}$ by $\nabla f_{\text{RF,LS}}$, the detailed optimization procedure follows \textbf{Algorithm~\ref{alg1}}.

\subsection{Pinching Beamforming Optimization}
Note that ${\underline h}_k$, $\nu$, ${{\bar R}_{\text{ZF}}}$, and ${{\bf{W}}_{{\rm{ZF}}}}$ are functions of the PAs' positions. Additionally, the activated user set $\cal P$ is also influenced by $\mathbf{X}$. Thus, the optimal PA positions $\mathbf{X}$ must be determined to maximize the WSR of the proposed FC-THB PASS. The problem for maximizing ${{\bar R}_{\text{ZF}}}$ is formulated as
\begin{subequations}
	\begin{align}
		\left( {\rm{P8}}\right) \quad \mathop {\max }\limits_{{{\bf{X}}}} \quad  &{\bar R}_{\text{ZF}} \left( {\bf{X}} \right) \\
		\mathrm{s.t.} \ \  &(\ref{1b}), (\ref{1c}).
	\end{align}
\end{subequations}
${{\bar R}_{\text{ZF}}}$ is again a sophisticated function w.r.t. $\mathbf{X}$. Recalling that ${\bf{\bar G}}$ in (P6) exhibits multi-modal behavior, ${\underline{h}_k}$, $\nu$, and hence ${{\bar R}_{\text{ZF}}}$ inherit similar multi-modal characteristics. To address the similar multi-modal optimization problem (P8), the SHADE algorithm, introduced in Section~\ref{sec3}, is again employed. The detailed implementation is provided in \textbf{Algorithm~\ref{alg2}}, with the objective function $f_X$ replaced by $\bar{R}_{\text{ZF}}$.

\subsection{Overall Algorithm, Convergence, and Complexity}
Based on the derived update rules for the digital and analog beamformers, and position optimization, the proposed ZF-based low-complexity algorithm for the FC-THB PASS is summarized in \textbf{Algorithm~\ref{alg4}}. 
\begin{algorithm}[tbp]
	\setlength{\textfloatsep}{0.cm}
	\setlength{\floatsep}{0.cm}
	\small
	\caption{ZF-based Algorithm for FC-THB PASS}
	\renewcommand{\algorithmicrequire}{\textbf{Input}}
	\renewcommand{\algorithmicensure}{\textbf{Output}}
	\label{alg4}
	\begin{algorithmic}[1]
		\REQUIRE System parameters.
		\ENSURE Sub-optimal ${\bf{W}}_{{\rm{BB}}}^{\star}$, ${\bf{W}}_{{\rm{RF}}}^{\star}$, ${\bf{X}}^{\star}$.
		\STATE Initialize ${\bf{W}}_{{\rm{BB}}}$, ${\bf{W}}_{{\rm{RF}}}$, ${\bf{X}}$;
		\STATE Update ${\bf{X}}$ following \textbf{Algorithm~\ref{alg2}} with ${\bar R}_{\text{ZF}} \left( {\bf{X}} \right)$;
		\STATE Calculate ${\bf{W}}_{{\rm{ZF}}}$;
		\REPEAT
		\STATE Update ${\bf{W}}_{{\rm{BB}}}$ by (\ref{hbwbb});
		\STATE Update ${\bf{W}}_{{\rm{RF}}}$ following \textbf{Algorithm~\ref{alg1}};
		\UNTIL a stopping criterion meets.
		\STATE Scale the digital beamformer as ${\bf{W}}_{\rm{BB}} \leftarrow \frac{{\sqrt P }}{{{{\left\| {{{\bf{W}}_{{\rm{RF}}}}{{\bf{W}}_{{\rm{BB}}}}} \right\|}_F}}}{\bf{W}}_{\rm{BB}}$.
	\end{algorithmic}
\end{algorithm}
In this procedure, the position is first optimized to maximize the weighted sum rate achieved by ZF beamforming, after which the ZF beamformer is obtained in closed form. The subsequent decomposition of the transmit beamformer from its ZF counterpart ensures monotonic and bounded convergence. \\
\indent In terms of optimality, since \textbf{Algorithm~\ref{alg4}} adopts also an AO framework, the resulting solution is at most coordinate-wisely optimal for the decomposition. Moreover, as the ZF beamformer itself is sub-optimal with respect to the WSR maximization, the proposed approach trades a slight performance loss for substantially reduced computational complexity. \\
\indent Specifically, the complexity for evaluating ${{\bar R}_{\text{ZF}}}$ is $C_L=\mathcal{O}(K^3+K^2N)$, which is lower than that of computing $f_X$. Hence, the complexity of position optimization is given by ${{i_L}{P_{{\rm{DE}}}}{C_L}}$, where $i_L$ denotes the number of iterations in \textbf{Algorithm~\ref{alg2}}. Compared with \textbf{Algorithm~\ref{alg3}}, \textbf{Algorithm~\ref{alg4}} not only imposes a much lighter computational load on the associated SHADE process, but also requires only a single execution of \textbf{Algorithm~\ref{alg2}} at the beginning.Apart from the position optimization, the computational complexity for updating the two beamformers remains comparable to that of \textbf{Algorithm~\ref{alg3}}.

\section{Numerical results} \label{sec5}
In this section, numerical results are presented to evaluate the performance with respect to the proposed FC architecture for PASS and developed AO and low-complexity algorithms. Furthermore, the comparison with conventional MIMO, widely adopted SC architecture, and the prevailing optimization approach is also provided to show their advantages.

\subsection{Simulation Setup}
The following setup are exploited throughout the simulation unless specified otherwise. Specifically, we assume that there are $K=2$ users randomly located in the $x-y$ plane, i.e., $z_k^u = 0$ within a square region with $D_x = D_y = 10$ m. The dielectric waveguides are uniformly positioned along the $y$-axis, where the inter-waveguide spacing $d_y$ is hence given by $D_y/(M-1)$. The length of each waveguide is set equal to the ground width, i.e., $L = D_x=10$ m. The carrier frequency is set to 30 GHz, and the minimum inter-PA distance is chosen as $\Delta d = \lambda/2 =5$ mm to prevent mutual coupling. The transmit power and the noise power are set to$P = 20$ dBm and $\sigma_k^2 = -90$ dBm, $\forall k$, respectively. The effective refractive index of the dielectric waveguide is set to $n_{\text{eff}} = 1.44$~\cite{neff}. Furthermore, the users are assigned equal serving priority, i.e., $\beta_k = 1/K, \forall k.$ For the proposed \textbf{Algorithm~\ref{alg2}}, the tunable parameters in the SHADE-based position optimization algorithm are configured as $P_{\rm{DE}}=5MN$, $p=20\%$, $G_{\rm{DE}}=100$, and $H_{\rm{DE}}=10$ for the introduced SHADE. For comparison, a grid-search–based position optimization is also employed as a benchmark, where the grid resolution is set to $0.1\lambda=1$ mm. Accordingly, each PA searches over $10^4$ candidate positions along its corresponding waveguide. \\
\indent For the performance comparison, the following benchmark schemes are considered:
\begin{itemize}
	\item \textbf{SC-PASS:} In this benchmark, a prevailing SC architecture~\cite{SCPASS} is adopted. It can be viewed as a special case of the FC PASS with $N_{\text{RF}}=M$ and $\mathbf{W}_{\text{RF}}=\mathbf{I}_M$. Accordingly, the signal received by the $k$-th user can be expressed as $$y_k = {{\bf{h}}_k^H}(\mathbf{X}) {{\bf{W}}_{{\rm{PB}}}}(\mathbf{X}){{\bf{W}}_{{\rm{SC}}}}{\bf{s}} + n_k,$$ where both ${{\bf{W}}_{{\rm{SC}}}}$ and $\mathbf{X}$ are optimized to maximize the weighted sum rate following the principle of the proposed algorithm. The optimization of ${{\bf{W}}_{{\rm{SC}}}}$ follows a similar procedure to that in~\cite{SCPASS}, while the position optimization is conducted using the proposed \textbf{Algorithm~\ref{alg2}}, which achieves a better optimum compared to the PA-wise grid search approach in~\cite{SCPASS}.
	\item \textbf{Massive MIMO:} In this benchmark, a conventional massive MIMO BS is positioned at (0, 5, 5) and equipped with a	uniform planar array parallel to the $y$-axis. A partially connected hybrid beamforming architecture~\cite{Hybrid} is adopted with $MN$ antennas and $M$ RF chains, matching the configuration of the considered SC-PASS. The antennas are assumed to spaced by $\lambda/2$. Under this setup, the signal received by the $k$-th user is given by $$y_k = {{\bf{h}}_k^H}{{\bf{W}}_{{\rm{RF}}}}{{\bf{W}}_{{\rm{BB}}}}{\bf{s}} + n_k,$$ where the $i$-th element in ${\bf{h}}_k$ is calculated by ${\eta \exp \left\{ {j{\textstyle{{2\pi } \over \lambda }}{D_{k,i}}} \right\}}/{{D_{k,i}}}$ with ${D_{k,i}}$ denoting the distance between $i$-th antenna and $k$-th user. The ${{\bf{W}}_{{\rm{RF}}}}$ and ${{\bf{W}}_{{\rm{BB}}}}$ are optimized by the proposed \textbf{Algorithm~\ref{alg3}} with fixed antenna positions.
\end{itemize}

\subsection{Convergence of the Proposed Algorithms}
\begin{figure}[tb]
	\centering
	\includegraphics[scale=0.45]{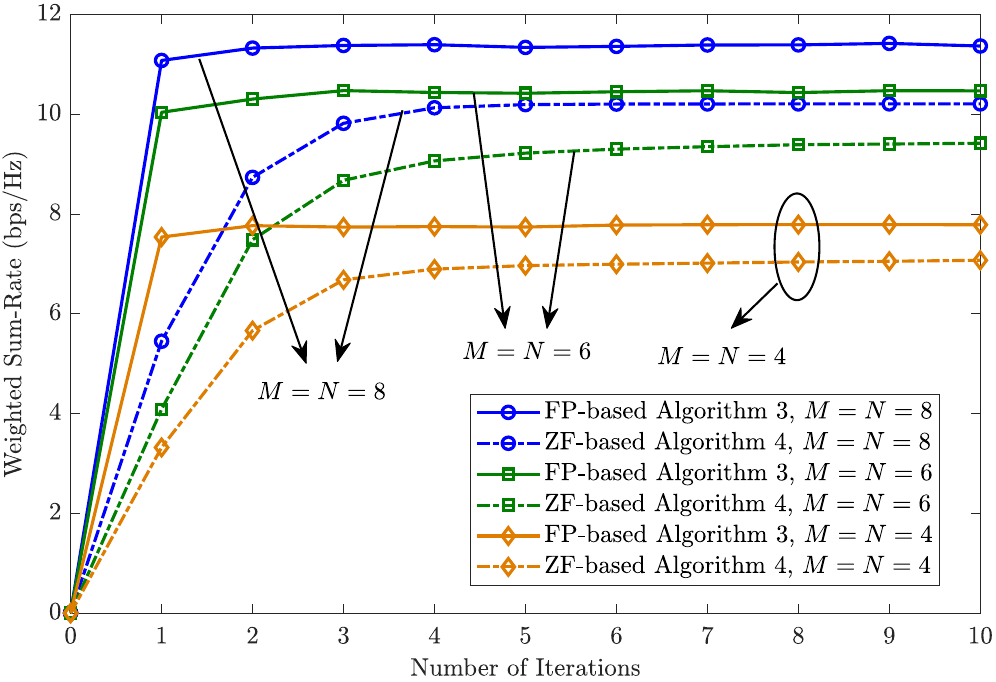}
	\caption{Illustration for the convergence of proposed algorithms}
	\label{convergence}
\end{figure}
The convergence behavior of the proposed algorithms for the FC-THB PASS under different waveguide and PA configurations is illustrated in Fig.~\ref{convergence}. In each setting, both the FP-based \textbf{Algorithm~\ref{alg3}} and the ZF-based \textbf{Algorithm~\ref{alg4}} are evaluated. The number of RF chains is set to half of the number of waveguides, i.e., $N_{\rm RF} = M/2$. The results validate the theoretical analysis and confirm the convergence of the proposed methods. For the FP-based \textbf{Algorithm~\ref{alg3}}, more than 95\% of the final achievable rate is obtained after only one outer iteration, and about 99\% after two iterations. The algorithm typically converges within three iterations. For the ZF-based \textbf{Algorithm~\ref{alg4}}, convergence is achieved within approximately five iterations when the system scale is large, i.e., with a large number of PAs. Moreover, the results are consistent with intuition, i.e., as the numbers of waveguides and PAs increase, the achievable WSR also improves. However, it is noteworthy that the performance gap between the FP-based and ZF-based algorithms widens with increasing $M$ and $N$. This phenomenon arises because, under a fixed total transmit power, enlarging the number of waveguides and PAs per waveguide expands the spatial DoFs of the system. The FP-based \textbf{Algorithm~\ref{alg3}} can flexibly exploit these additional DoFs to enhance the array gain while tolerating controlled interference. In contrast, the ZF-based design imposes strict orthogonality constraints, which increasingly waste spatial resources as $M$ and $N$ grow. Consequently, the performance gap between the optimal and ZF-based schemes widens with the array size.

\subsection{System-level Comparison}
\begin{figure}[tb]
	\centering
	\includegraphics[scale=0.45]{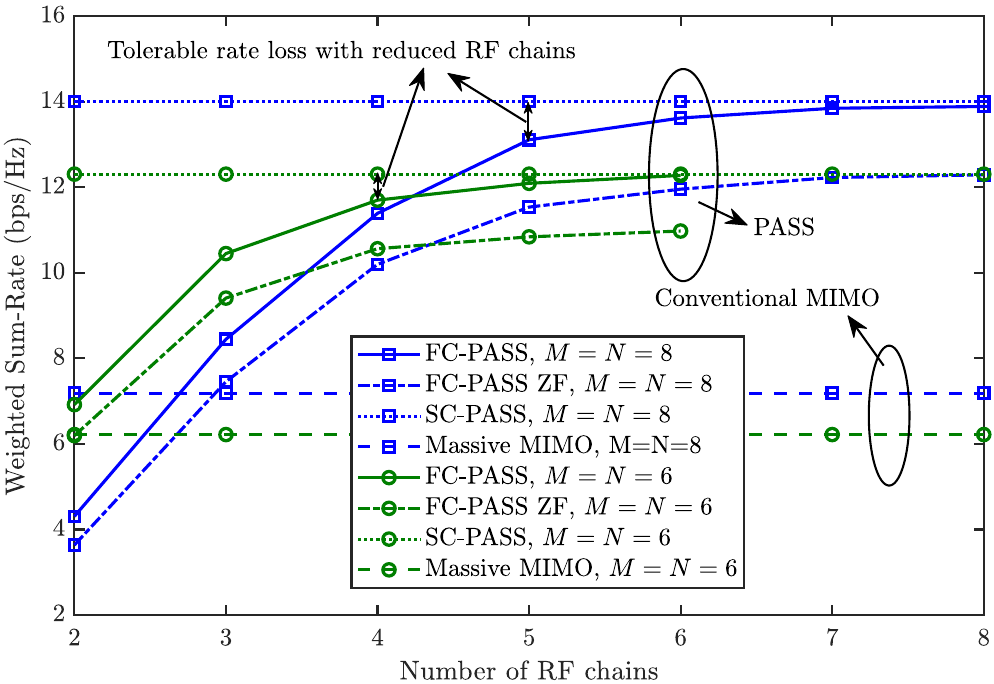}
	\caption{Weighted sum-rate achieved by different architectures versus the number of RF chains}
	\label{sys_com}
\end{figure}
\begin{figure}[tb]
	\centering
	\includegraphics[scale=0.45]{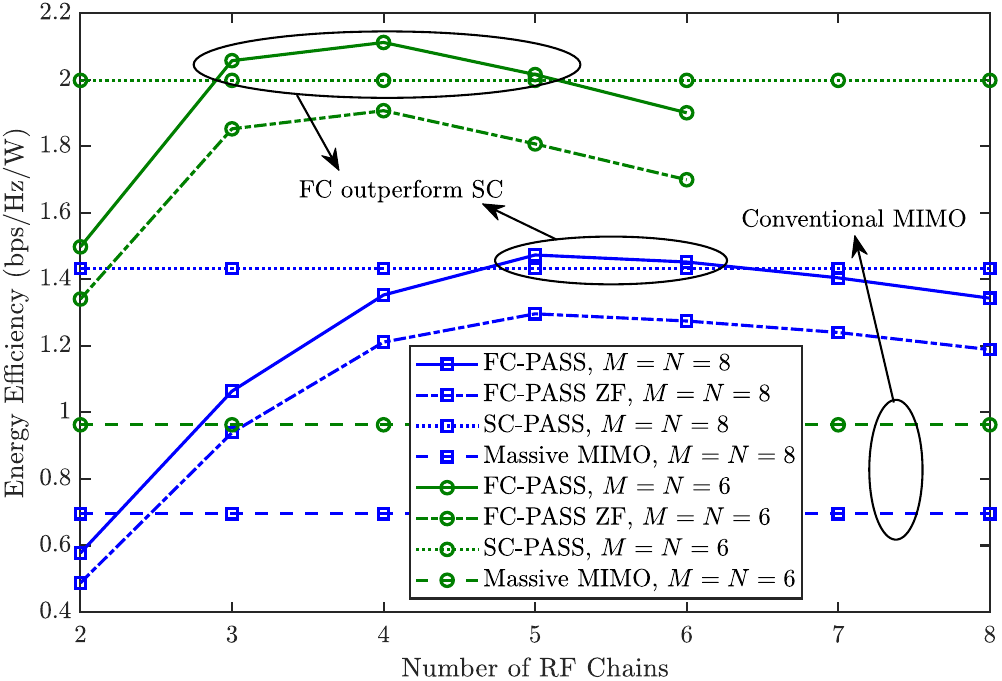}
	\caption{Energy efficiency achieved by different architectures versus the number of RF chains}
	\label{energy}
\end{figure}
\subsubsection{Spectral Efficiency}
In Fig.~\ref{sys_com}, the WSR performance of different architectures and systems is illustrated. The figure also depicts how the results vary with the number of RF chains under two array-size configurations. Specifically, the performance of the proposed FC-THB PASS, obtained using the two developed algorithms, labeled by ``FC-PASS'' and ``FC-PASS ZF'' in Fig.~\ref{sys_com}, is compared with the conventional SC-PASS and the well-established partially connected hybrid beamforming massive MIMO. The results validate the superiority of both the PASS and the proposed FC architecture. In particular, the conventional partially connected hybrid beamforming massive MIMO achieves only about half of the WSR attained by the SC-PASS when the same numbers of antennas and RF chains are employed. \\
\indent When the number of available RF chains is small, both algorithms under the proposed FC-THB PASS yield lower performance than the conventional MIMO system. This is because the overall performance is limited by the RF chain bottleneck, which constrains the rank of the effective channel. As the number of RF chains increases, the performance of the proposed FC-THB PASS gradually approaches that of the SC-PASS. Notably, the performance gap quickly diminishes when the number of RF chains becomes comparable to the number of waveguides, indicating that the FC architecture can achieve near-optimal performance with only a subset of RF chains. For instance, approximately $90\%$ and $95\%$ of the maximum achievable rate can be obtained with five and six RF chains, respectively, when $M=N=8$. \\
\indent As expected, the proposed low-complexity ZF-based algorithm incurs about a 1 dB loss in WSR compared to the sum-rate–driven algorithm; however, this performance gap becomes smaller under lower transmit power or smaller array sizes. These results highlight the flexible trade-off between spectral efficiency and computational complexity enabled by the proposed FC architecture and algorithms. It is also observed that when the number of RF chains is small, systems with more antennas may perform worse than those with fewer antennas. This phenomenon arises because a limited number of RF chains must distribute the same total power among more antennas. Hence, each antenna is allocated with less power but the DoFs in beamforming is restricted, leading to degraded overall performance.

\subsubsection{Energy Efficiency}
The main distinction between the conventional SC-PASS and the proposed FC-PASS architecture lies in the number of phase shifters and the resulting flexible interconnection between RF chains and waveguides. As discussed earlier, in terms of spectral efficiency, the SC-PASS offers more beamforming design DoFs in both amplitude and phase, and thus generally outperforms the FC-PASS. However, when considering power consumption, it becomes important to examine how much energy can be saved by the proposed architecture in exchange for a certain rate loss, i.e., its achievable energy efficiency. Energy efficiency is defined as the ratio between the achievable spectral efficiency and the total power consumption, expressed as
\begin{equation} \label{ee}
	{\eta _{ee}} = \frac{R}{{P + {N_{{\rm{RF}}}}{P_{{\rm{RF}}}} + {N_{{\rm{PS}}}}{P_{{\rm{PS}}}} + {N_{{\rm{PA}}}}{P_{{\rm{PA}}}}}} (\text{bps/Hz/W}),
\end{equation}
where $P$, ${P_{{\rm{RF}}}}$, ${P_{{\rm{PS}}}}$, and ${P_{{\rm{PA}}}}$ denote the transmit power and the power consumed by each RF chain, PS, and power amplifier on PA, respectively. In the simulation, these parameters are set as $P=100$ mW, ${P_{{\rm{PS}}}}=10$ mW, and ${P_{{\rm{PA}}}}=100$ mW~\cite{ManOpt2}. Including the baseband processing cost, the power consumption per RF chain is set to ${P_{{\rm{RF}}}}=400$ mW~\cite{power}. The numbers of components are denoted by ${N_{{\rm{RF}}}}$, ${N_{{\rm{PS}}}}$, and ${N_{{\rm{PA}}}}$, where in the proposed FC-PASS, ${N_{{\rm{PS}}}}={N_{{\rm{RF}}}} \times M$ and ${N_{{\rm{PA}}}} = M \times N$. \\
\indent The energy efficiency results for the proposed FC-PASS, the conventional SC-PASS, and the partially connected hybrid beamforming massive MIMO are illustrated in Fig.~\ref{energy}, under the same settings as those in Fig.~\ref{sys_com}. The results reveal that, similar to the spectral efficiency trends, the PASS achieves approximately twice the energy efficiency of the conventional partially connected hybrid beamforming massive MIMO. Notably, within the PASS family, the proposed FC architecture achieves higher energy efficiency than the SC counterpart when the number of RF chains is moderate, owing to its flexible resource allocation capability. However, as the array size increases, the overall energy efficiency decreases, and the performance advantage of the FC architecture gradually diminishes. \\
\indent These findings verify that the proposed FC-THB PASS achieves a favorable trade-off between spectral efficiency and energy efficiency, offering a flexible and energy-aware design for scalable PASS implementations.

\subsection{Algorithm-level Comparison}
\begin{figure}[tb]
	\centering
	\includegraphics[scale=0.45]{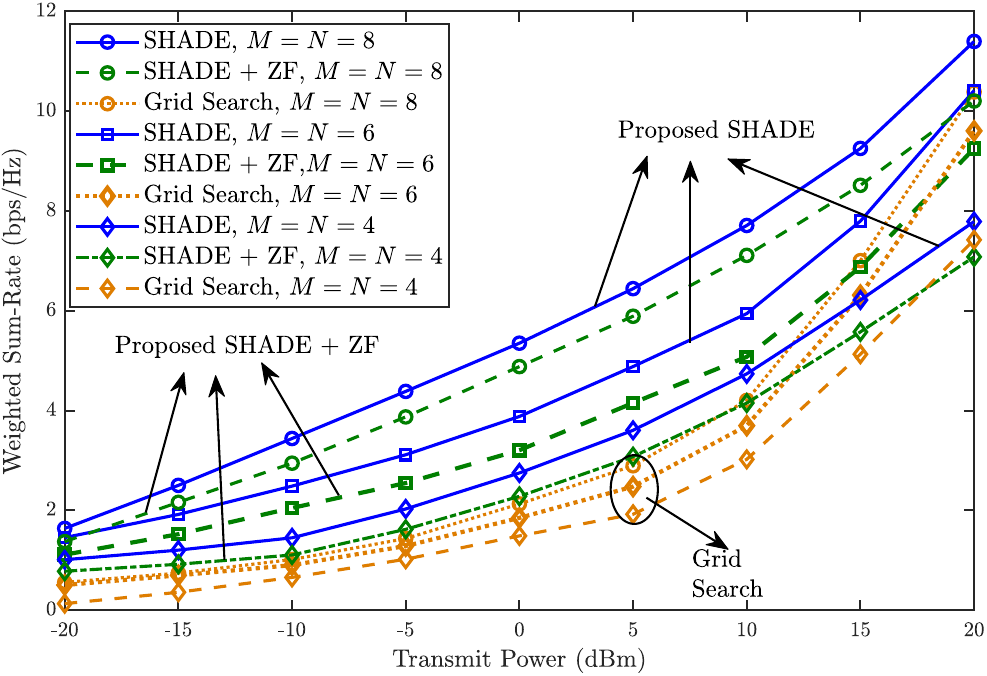}
	\caption{Weighted sum-rate of the proposed FC-THB PASS achieved by different algorithms versus the transmit power}
	\label{alg_com}
\end{figure}
The variation of the WSR achieved by the proposed FC-THB PASS under different algorithms as a function of transmit power is shown in Fig.~\ref{alg_com}. Three methods are compared: the proposed FP-based algorithm (\textbf{Algorithm~\ref{alg3}}), labeled by ``SHADE'', the ZF-based low-complexity algorithm (\textbf{Algorithm~\ref{alg4}}), labeled by ``SHADE +ZF'', and the commonly used PA-wise grid search algorithm. It is observed that the proposed FP-based algorithm consistently achieves the highest spectral efficiency with the proposed SHADE algorithm, while the ZF-based algorithm exhibits less than 1 dB performance loss relative to it due to the adoption of sub-optimal ZF beamforming. The grid search method achieves intermediate performance at high transmit power but performs worst in the low-power regime. \\
\indent Moreover, the performance gap between the grid search and the proposed algorithms widens as the array size increases. This is due to the inherent limitation of the PA-wise approach, which optimizes each PA position individually and therefore only achieves a coordinate-wise local optimum. As the number of PAs grows, the likelihood of converging to a saddle point increases, resulting in suboptimal performance. In contrast, both the proposed methods are based on SHADE, which effectively approaches a near-global optimum for PA position optimization with comparable computational complexity, demonstrating the superiority and efficiency of the proposed algorithms.

\subsection{Impact of serving region}
\begin{figure}[tb]
	\centering
	\includegraphics[scale=0.45]{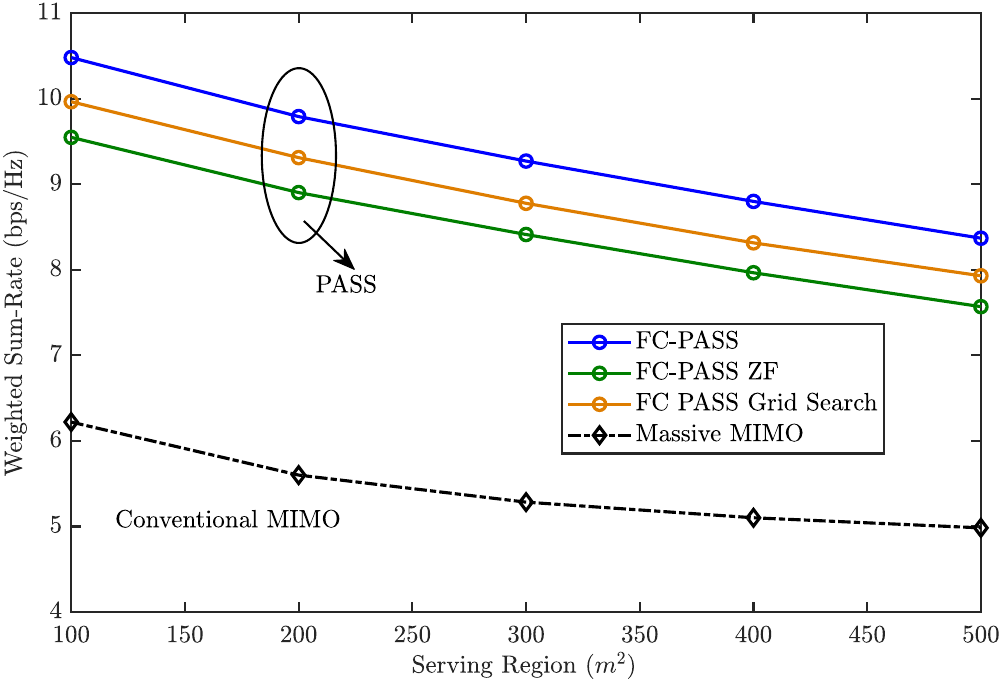}
	\caption{Weighted sum-rate versus the size of serving region}
	\label{region}
\end{figure}
It is also of interest to examine how the achieved WSR varies with the size of the serving region. In this simulation, the region is assumed to be a square, i.e., $D_x=D_y$. The results are illustrated in Fig.~\ref{region}. The FC-PASS is configured with 6 waveguides, each equipped with 6 PAs, and is fed by 3 RF chains. For comparison, the conventional partially connected massive MIMO system employs 36 antennas with 6 RF chains. The transmit power is set to 20 dBm, and all other parameters follow the default settings. \\
\indent Fig.~\ref{region} indicates that the achieved WSR decreases as the serving region expands for both the FC-PASS and the conventional MIMO system. Notably, the rate of reduction is slower for the FC-PASS, reflecting its superior adaptability. For example, when the serving region increases from $100 \text{m}^2$ to $400 \text{m}^2$, i.e., both the length and width are doubled, the achievable WSR of the FC-PASS decreases to approximately $85 \%$ of its original value, whereas it drops to $80 \%$ for the conventional MIMO system. This improvement is attributed to the flexibility of PASS in reconfiguring PA positions to mitigate path loss, highlighting the advantage of the PASS architecture in larger service areas.

\subsection{Robustness to CSI}
\begin{figure}[tb]
	\centering
	\includegraphics[scale=0.45]{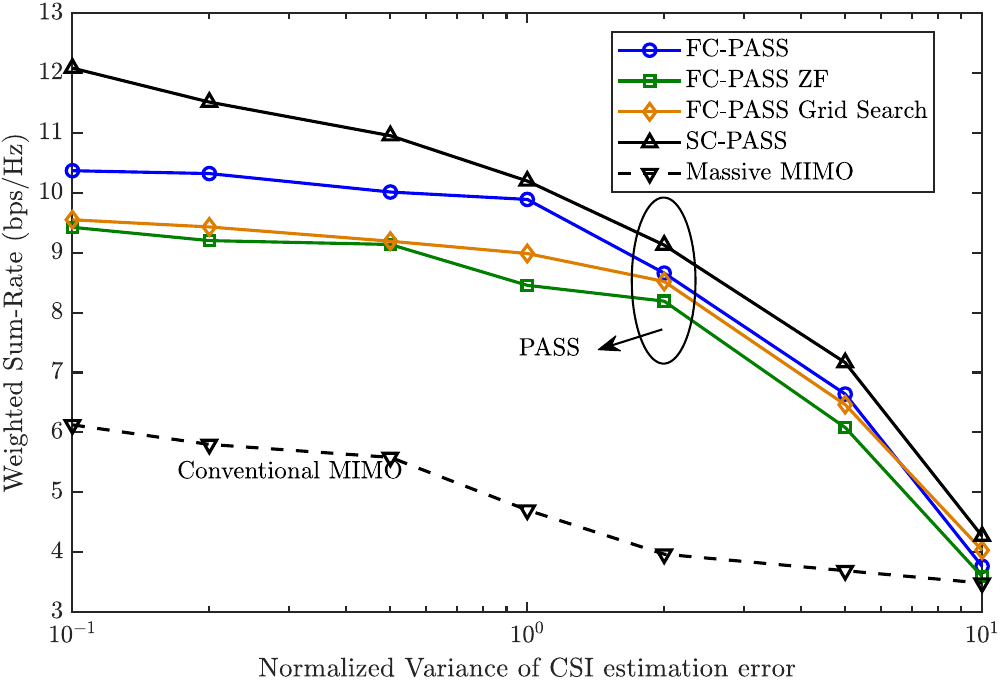}
	\caption{Weighted sum-rate versus the normalized variance of CSI estimation error}
	\label{robustness}
\end{figure}
Fig.~\ref{robustness} investigates the impact of channel estimation errors on the proposed FC-PASS. The simulation adopts the default setting with $M=N=6$. A deterministic model is used to characterize channel estimation errors, where the channel estimated at the BS is perturbed by CSCG noise:
\begin{equation}
	{\mathbf{\hat H}} = \mathbf{H} + \Delta \mathbf{H},
\end{equation}
where ${\mathbf{\hat H}}$ and $\mathbf{H}$ denote the estimated and true channels, respectively. Each element of $\Delta \mathbf{H}$ is assumed to follow $\mathcal{CN}(0, \epsilon)$, where $\epsilon={\textstyle{{\left\| {\bf{H}} \right\|_F} \over {MNK}}}$ representing the normalized variance of the CSI estimation error. The system is optimized based on the estimated channel and evaluated using the actual channel. As shown in Fig.~\ref{robustness}, the achieved spectral efficiency decreases as the CSI error increases, as expected. When the channel estimate becomes less accurate, the performance of the proposed FC-PASS approaches that of the SC-PASS. Nevertheless, even under imperfect CSI, PASS still outperforms the conventional massive MIMO system. However, when the estimation error becomes extremely large, the performance of all architectures and algorithms deteriorates significantly to a similar level.

\section{Conclusion} \label{sec6}
In this paper, we proposed a novel FC architecture for the PASS, which enables flexible connectivity between RF chains and waveguides through a tunable PS network. A joint transmit beamforming and PA position optimization problem was formulated to maximize the WSR under practical power and distance constraints. To solve this challenging non-convex problem, an AO algorithm was developed, complemented by a ZF–based low-complexity alternative. Simulation results verified that the proposed FC-THB PASS can achieve comparable spectral efficiency and superior energy efficiency to the conventional SC architecture with fewer RF chains, while the ZF-based method provides a favorable balance between performance and computational complexity. \\
\indent Since optimizing PA positions still entails considerable computational complexity, it is desirable to develop more lightweight approaches that strike a balance between optimality and complexity. Learning-based or hybrid model-driven methods can also be explored to enable real-time beamforming and position control. Moreover, the current performance of PASS is highly sensitive to the precision of PA positioning due to the rapid phase variation along the waveguide. Therefore, beyond improving the accuracy of PA movement, future work should incorporate practical hardware constraints into the optimization framework to enhance system robustness against hardware impairments.

\appendices
\section{Proof of Lemma~\ref{lemma1}} \label{appa}
Suppose for an arbitrary non-zero ${{\bf{W}}_{{\rm{BB}},0}}$ ,${{\bf{W}}_{{\rm{RF}},0}}$, and ${\bf{X}}_0$, the value of $\bar R$ is ${\bar R}_0$. The SINR in (P1) achieved by $\frac{{\sqrt P }}{{{{\left\| {{{\bf{W}}_{{\rm{RF}},0}}{{\bf{W}}_{{\rm{BB}},0}}} \right\|}_F}}}{\bf{W}}_{{\rm{BB}},0}$ ,${{\bf{W}}_{{\rm{RF}},0}}$, and ${\bf{L}}_0$ is expressed as
\begin{equation}
	{{\gamma _k}} = \frac{{\frac{P}{{\left\| {{{\bf{W}}_0}} \right\|_F^2}}{\rm{Tr}}( {{{\bf{H}}_k}{{\bf{W}}_0}{{\bf{E}}_k}{\bf{W}}_0^H} )}}{{\frac{P}{{\left\| {{{\bf{W}}_0}} \right\|_F^2}}{\rm{Tr}}( {{{\bf{H}}_k}{{\bf{W}}_0}\left( {{{\bf{I}}_K} - {{\bf{E}}_k}} \right){\bf{W}}_0^H} ) + \sigma _k^2}},
\end{equation}
where ${{{\bf{W}}_0}} = {{\bf{W}}_{{\rm{RF}},0}}{\bf{W}}_{{\rm{BB}},0}$. Hence, the value of the objective function is also equal to ${\bar R}_0$. This implies that (\ref{1}) and (\ref{1.5}) can always achieve the same value with a certain relationship in the variables. Moreover, $\frac{{\sqrt P }}{{{{\left\| {{{\bf{W}}_{{\rm{RF}},0}}{{\bf{W}}_{{\rm{BB}},0}}} \right\|}_F}}}{\bf{W}}_{{\rm{BB}},0}$ ,${{\bf{W}}_{{\rm{RF}},0}}$, and ${\bf{x}}_0$ are still feasible for (P1). In a nutshell, (P2) and (P1) and actually equivalent. When (P2) arrives its global optimum at ${\bf{W}}_{{\rm{BB}}}^\star$, ${\bf{W}}_{{\rm{RF}}}^\star$, and ${\bf{X}}^\star$. (P1) is also maximized at $\frac{{\sqrt P }}{{{{\left\| {{{\bf{W}}_{{\rm{RF}}}^\star}{{\bf{W}}_{{\rm{BB}}}^\star}} \right\|}_F}}}{\bf{W}}_{{\rm{BB}}}^\star$, ${\bf{W}}_{{\rm{RF}}}^\star$, and ${\bf{X}}^\star$. 

\section{Example Illustrating the Challenge in PA Position Optimization} \label{challenge}
\begin{figure}[tb]
	\centering
	\includegraphics[scale=0.5]{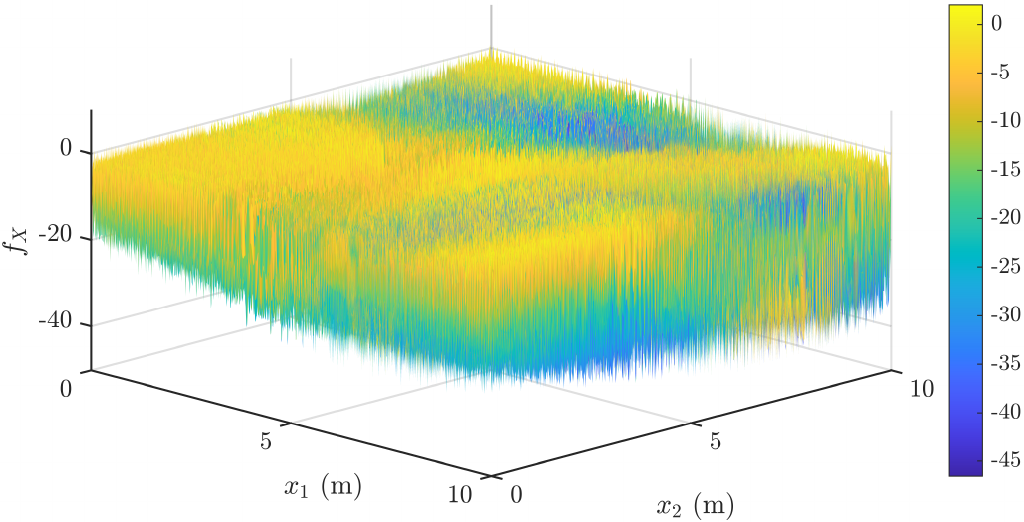}
	\caption{Illustration of the multi-modal objective function in PA position optimization}
	\label{fig_challenge}
\end{figure}
It is observed from (P6) and the expression of ${{\bar g}_{k,m}}\left( {{{\bf{x}}_m}} \right)$ that the phase term in the numerator varies in the wavelength scale. Therefore, in the scenario where the wavelength is small, e.g., 10mm for 30GHz carrier frequency, the change of phase is way too rapidly than the amplitude. This phenomenon leads to the multi-modal characteristic of the PA position optimization. \\
\indent A toy example for this circumstance is illustrated in Fig.~\ref{fig_challenge}. It is assumed that two users are simultaneously served by two waveguides deployed with a PA each, i.e., $M=2,N=1,K=2$ and the two users are on the $xy$-plane with randomly generated positions. Observing from Fig.~\ref{fig_challenge}, in this 10$\text{m}^2$ region, numerous stationary points serve as the local maxima and minima, making the surface of the positions' function behave like a field of golden wheat. This highlights the difficulty in optimizing the PAs' positions. Specifically, the performance gap between local optima can be extremely large due to the significant oscillations in free-space path loss, which result the methods searching stationary solutions fail. In this paper, we adopt the evolution-based algorithm to address this challenge, which is popular in applied math community for solving the multi-modal problem. However, the evolution-based algorithm may introduce a high computational complexity. Further directions will focus on developing a heuristic algorithm based on the specific structure of this problem.

\balance
\bibliographystyle{IEEEtran}
\bibliography{reference/mybib}

@STRING{ICC        = "Proc. {IEEE} Int. Conf. Commun. ({ICC})"}

@STRING{ICCC        = "Proc. {IEEE/CIC} Int. Conf. Commun. China ({ICCC})"}

@book{MIMO,
	title={Foundations of MIMO communication},
	author={Heath Jr, Robert W and Lozano, Angel},
	year={2018},
	publisher={Cambridge, U.K.: Cambridge Univ. Press}
}

@article{RA,
	title={Reconfigurable antennas},
	author={Haupt, Randy L. and Lanagan, Michael},
	journal={IEEE Antennas Propag. Mag.},
	volume={55},
	number={1},
	pages={49--61},
	year={2013},
	month=Feb,
	publisher={IEEE}
}

@article{IRS,
	title={Intelligent reflecting surfaces: Physics, propagation, and pathloss modeling},
	author={{\"O}zdogan, {\"O}zgecan and Bj{\"o}rnson, Emil and Larsson, Erik G},
	journal={IEEE Wireless Commun. Lett.},
	volume={9},
	number={5},
	pages={581--585},
	year={2019},
	month=May,
	publisher={IEEE}
}

@article{IRS1,
	title={Intelligent reflecting surface-aided wireless communications: A tutorial},
	author={Wu, Qingqing and Zhang, Shuowen and Zheng, Beixiong and You, Changsheng and Zhang, Rui},
	journal={IEEE Trans. Commun.},
	volume={69},
	number={5},
	pages={3313--3351},
	year={2021},
	month=May,
	publisher={IEEE}
}

@ARTICLE{FAS,
	author={Wong, Kai-Kit and Shojaeifard, Arman and Tong, Kin-Fai and Zhang, Yangyang},
	journal={IEEE Trans. Wireless Commun.}, 
	title={Fluid Antenna Systems},
	year={2021},
	volume={20},
	number={3},
	month=Mar,
	pages={1950-1962},
	doi={10.1109/TWC.2020.3037595}}

@ARTICLE{MAS,
	author={Zhu, Lipeng and Ma, Wenyan and Zhang, Rui},
	journal={IEEE Commun. Mag.}, 
	title={Movable Antennas for Wireless Communication: Opportunities and Challenges}, 
	year={2024},
	volume={62},
	number={6},
	pages={114-120},
	month=Jun,
	doi={10.1109/MCOM.001.2300212}}

@book{neff,
	title={Microwave engineering: theory and techniques},
	author={Pozar, David M},
	year={2021},
	publisher={Hoboken, NJ, USA: John Wiley \& Sons}
}

@ARTICLE{PASS1,
	author={Ding, Zhiguo and Schober, Robert and Vincent Poor, H.},
	journal={IEEE Trans. Commun.}, 
	title={Flexible-Antenna Systems: A Pinching-Antenna Perspective}, 
	year={2025},
	volume={73},
	number={10},
	pages={9236-9253},
	doi={10.1109/TCOMM.2025.3555866}}

@article{PASS2,
	title={Pinching-Antenna Systems: Architecture Designs, Opportunities, and Outlook},
	author={Liu, Yuanwei and Wang, Zhaolin and Mu, Xidong and Ouyang, Chongjun and Xu, Xiaoxia and Ding, Zhiguo},
	journal={IEEE Commun. Mag.},
	year={early access, Sep. 2025. doi: 10.1109/MCOM.001.2500037},
	volume={},
	number={},
	pages={1-1},
}

@article{PASS3,
	title={Performance analysis of pinching-antenna systems},
	author={Tyrovolas, Dimitrios and Tegos, Sotiris A and Diamantoulakis, Panagiotis D and Ioannidis, Sotiris and Liaskos, Christos K and Karagiannidis, George K},
	journal={IEEE Trans. Cogn. Commun. Netw.},
	year={2025},
	publisher={IEEE}
}

@article{PASS4,
	title={Array gain for pinching-antenna systems ({PASS})},
	author={Ouyang, Chongjun and Wang, Zhaolin and Liu, Yuanwei and Ding, Zhiguo},
	journal={IEEE Commun. Lett.},
	year={2025},
	publisher={IEEE}
}

@article{PASS5,
	title={Rate maximization for downlink pinching-antenna systems},
	author={Xu, Yanqing and Ding, Zhiguo and Karagiannidis, George K},
	journal={IEEE Wireless Commun. Lett.},
	year={2025},
	volume={14},
	number={5},
	pages={1431-1435},
	month=May,
}

@ARTICLE{PASS6,
	author = {Bereyhi, Ali and Asaad, Saba and Ouyang, Chongjun and Ding, Zhiguo and Poor, H. Vincent},
	title = {{MIMO-PASS}: Uplink and Downlink Transmission via {MIMO} Pinching-Antenna Systems},
	journal={arXiv preprint arXiv:2503.03117},
	year = {2025}}

@ARTICLE{PASS7,
	author = {Zhao, Jingjing and Song, Haowen and Mu, Xidong and Cai, Kaiquan and Zhu, Yanbo and Liu, Yuanwei},
	title = {Pinching-Antenna Systems-Enabled Multi-User Communications: Transmission Structures and Beamforming Optimization},
	journal={arXiv preprint arXiv:2508.14458},
	year = {2025}}

@ARTICLE{PASS8,
	author={Wang, Zhaolin and Ouyang, Chongjun and Mu, Xidong and Liu, Yuanwei and Ding, Zhiguo},
	journal={IEEE Trans. Commun.}, 
	title={Modeling and Beamforming Optimization for Pinching-Antenna Systems}, 
	year={early access, Oct. 2025. doi: 10.1109/TCOMM.2025.3621049},
	volume={},
	number={},
	pages={1-1},
	month=Oct}

@ARTICLE{PASS9,
	author={Zhang, Luyuan and Mu, Xidong and Liu, An and Liu, Yuanwei},
	journal={IEEE Wireless Commun. Lett.}, 
	title={Two-Timescale Joint Transmit and Pinching Beamforming for Pinching-Antenna Systems}, 
	year={2025},
	month=Jun,
	volume={14},
	number={9},
	pages={2897-2901},
	doi={10.1109/LWC.2025.3582525}}

@ARTICLE{PASS10,
	author = {Gan, Xu and Wang, Zhaolin and Liu, Yuanwei},
	title = {Dual-Scale Antenna Deployment for Pinching Antenna Systems},
	journal={arXiv preprint arXiv:2510.27185},
	year = {2025}}

@INPROCEEDINGS{PASS11,
	author={Xu, Xiaoxia and Liu, Yuanwei and Mu, Xidong and Gan, Deqiao and Nallanathan, Arumugam},
	booktitle={Proc. {IEEE/CIC} Int. Conf. Commun. China ({ICCC})}, 
	title={Beamforming for Pinching Antenna Systems ({PASS}): {KKT}-Guided Large Model Learning}, 
	year={2025},
	month=Aug,
	volume={},
	number={},
	pages={1-6},
	doi={10.1109/ICCC65529.2025.11148833}}

@article{PASS12,
	title={Pinching antenna: Using a dielectric waveguide as an antenna},
	author={Suzuki, Hiroshi Okazaki Yasunori and Kawai, Kunihiro},
	journal={NTT DOCOMO Technical J},
	volume={23},
	number={3},
	pages={5--12},
	year={2022},
	month=Jan
}

@inproceedings{PASS13,
	title={38 {GHz} and 60 {GHz} angle-dependent propagation for cellular \& peer-to-peer wireless communications},
	author={Rappaport, Theodore S and Ben-Dor, Eshar and Murdock, James N and Qiao, Yijun},
	booktitle={Proc. {IEEE} Int. Conf. Commun. ({ICC})},
	pages={4568--4573},
	year={2012},
}

@ARTICLE{FP1,
	author={Shen, Kaiming and Yu, Wei},
	journal={IEEE Trans. Signal Process.}, 
	title={Fractional Programming for Communication Systems—{P}art {I}: Power Control and Beamforming}, 
	year={2018},
	month=May,
	volume={66},
	number={10},
	pages={2616-2630},
	doi={10.1109/TSP.2018.2812733}}

@ARTICLE{FP2,
	author={Shen, Kaiming and Yu, Wei},
	journal={IEEE Trans. Signal Process.}, 
	title={Fractional Programming for Communication Systems—{P}art {II}: Uplink Scheduling via Matching}, 
	year={2018},
	month=May,
	volume={66},
	number={10},
	pages={2631-2644},
	doi={10.1109/TSP.2018.2812748}}

@book{Boyd,
	title={Convex optimization},
	author={Boyd, Stephen P and Vandenberghe, Lieven},
	year={2004},
	publisher={Cambridge, U.K.: Cambridge Univ. Press}
}

@inproceedings{PGD,
	title={Projected gradient descent on {R}iemannian manifolds with applications to online power system optimization},
	author={Hauswirth, Adrian and Bolognani, Saverio and Hug, Gabriela and D{\"o}rfler, Florian},
	booktitle={Proc. Annu. Allerton Conf. Commun., Control, Comput. (Allerton)},
	pages={225--232},
	year={2016},
}

@book{ManOpt,
	title={An introduction to optimization on smooth manifolds},
	author={Boumal, Nicolas},
	year={2023},
	publisher={Cambridge, U.K.: Cambridge Univ. Press}
}

@article{ManOpt2,
	title={Alternating minimization algorithms for hybrid precoding in millimeter wave {MIMO} systems},
	author={Yu, Xianghao and Shen, Juei-Chin and Zhang, Jun and Letaief, Khaled B},
	journal={IEEE J. Sel. Top. Signal Process.},
	volume={10},
	number={3},
	pages={485--500},
	year={2016},
	month=Apr,
	publisher={IEEE}
}

@ARTICLE{EPP1,
	author={Liu, Junbin and Liu, Ya and Ma, Wing-Kin and Shao, Mingjie and So, Anthony Man-Cho},
	journal={IEEE Trans. Signal Process.}, 
	title={Extreme Point Pursuit—Part {I}: A Framework for Constant Modulus Optimization}, 
	year={2024},
	month=Sep,
	volume={72},
	number={},
	pages={4541-4556},
	doi={10.1109/TSP.2024.3458008}}

@ARTICLE{EPP2,
	author={Liu, Junbin and Liu, Ya and Ma, Wing-Kin and Shao, Mingjie and Man-Cho So, Anthony},
	journal={IEEE Trans. Signal Process.}, 
	title={Extreme Point Pursuit—Part {II}: Further Error Bound Analysis and Applications}, 
	year={2024},
	month=Sep,
	volume={72},
	number={},
	pages={4557-4572},
	doi={10.1109/TSP.2024.3458015}}

@article{nonlinear,
	title={Nonlinear programming},
	author={Bertsekas, Dimitri P},
	journal={J. Oper. Res. Soc.},
	volume={48},
	number={3},
	pages={334--334},
	year={1997},
	publisher={Taylor \& Francis}
}

@book{multi-modal,
	title={Numerical optimization},
	author={Nocedal, Jorge and Wright, Stephen J},
	year={2006},
	publisher={New York, NY, USA: Springer}
}

@ARTICLE{ZF,
	author={Wiesel, Ami and Eldar, Yonina C. and Shamai, Shlomo},
	journal={IEEE Trans. Signal Process.}, 
	title={Zero-Forcing Precoding and Generalized Inverses}, 
	year={2008},
	volume={56},
	number={9},
	mongth=Sep,
	pages={4409-4418},
	doi={10.1109/TSP.2008.924638}}

@INPROCEEDINGS{SHADE,
	author={Tanabe, Ryoji and Fukunaga, Alex},
	booktitle={Proc. IEEE Congr. Evol. Comput.}, 
	title={Success-history based parameter adaptation for Differential Evolution}, 
	year={2013},
	volume={},
	number={},
	pages={71-78},
	doi={10.1109/CEC.2013.6557555}}

@INPROCEEDINGS{SCPASS,
	author={Bereyhi, Ali and Asaad, Saba and Ouyang, Chongjun and Ding, Zhiguo and Poor, H. Vincent},
	booktitle={Proc. {IEEE} Int. Conf. Commun. Workshops (ICC Workshops)}, 
	title={Downlink Beamforming with Pinching-Antenna Assisted {MIMO} Systems}, 
	year={2025},
	volume={},
	number={},
	pages={1-6},
	doi={10.1109/ICCWorkshops67674.2025.11162401}}

@ARTICLE{Hybrid,
	author={Sohrabi, Foad and Yu, Wei},
	journal={IEEE J. Sel. Topics Signal Process.}, 
	title={Hybrid Digital and Analog Beamforming Design for Large-Scale Antenna Arrays}, 
	year={2016},
	volume={10},
	number={3},
	pages={501-513},
	doi={10.1109/JSTSP.2016.2520912}}

@book{power,
	title={Millimeter wave wireless communications},
	author={Rappaport, Theodore S and Heath Jr, Robert W and Daniels, Robert C and Murdock, James N},
	year={2015},
	publisher={Upper Saddle River, NJ, USA: Pearson Education}
}

\end{document}